\newtheorem{theorem}{Theorem}
\newtheorem{lemma}{Lemma}
\newtheorem{proposition}{Proposition}
\newtheorem{remark}{Remark}
\begin{document}
\baselineskip=12pt

\title{ Mobility-Aware Uplink Interference Model for 5G Heterogeneous Networks }

\author{Yunquan~Dong,~\IEEEmembership{Member,~IEEE},
        Zhi Chen,~\IEEEmembership{Member,~IEEE},\\
        Pingyi~Fan,~\IEEEmembership{Senior Member,~IEEE},
        and Khaled~Ben~Letaief,~\IEEEmembership{Fellow,~IEEE}
        \thanks{
            Y. Dong is with the Department of Electrical and Computer Engineering, Seoul National University, Seoul, Korea 151744.
            Y. Dong was with the Department of Electrical Engineering, Tsinghua University,
            Beijing, China, 100084. Email: ydong@snu.ac.kr.

            Z. Chen is with the Department of Electrical and Computer Engineering, University of Waterloo, Waterloo, Ontario, Canada, N2L3G1, Email: z335chen@uwaterloo.ca.

            P. Fan is with the Department of Electrical Engineering, Tsinghua University, Beijing, China, 100084. Email: fpy@tsinghua.edu.cn.

            Khaled B. Letaief is with the Hamad bin Khalifa University, Qatar (kletaief@hbku.edu.qa). He is also with the Department of Electrical and Computer Engineering, HKUST, Clear Water Bay, Kowloon, Hong Kong (eekhaled@ust.hk)
            }
        }

\maketitle

\begin{abstract}
  To meet the surging demand for throughput, 5G cellular networks need to be more heterogeneous and much denser, by deploying more and more small cells.
   In particular, the number of users in each small cell can change dramatically due to users' mobility, resulting in random and time varying uplink interference.
  This paper considers the uplink interference in a 5G heterogeneous network which is jointly covered by one macro cell and several small cells.
   Based on the L\'{e}vy flight moving model, a mobility-aware interference model is proposed to characterize the uplink interference from macro cell users to small cell users.
  In this model, the total uplink interference is characterized by its moment generating function, for both closed subscriber group (CSG) and open subscriber group (CSG) femto cells.
   In addition, the proposed interference model is a function of basic step length, which is a key velocity parameter of L\'{e}vy flights.
  It is shown by both theoretical analysis and simulation results that the proposed interference model provides a flexible way of evaluating the system performance in terms of success probability and average rate.
\end{abstract}

\begin{keywords}
interference modeling, heterogeneous networks, 5G, user mobility, L\'{e}vy flights.
\end{keywords}

\section{Introduction}
As the long term evolution/advanced (LTE/LTE-A) cellular system has been deployed all over the world and is reaching maturity,  the standards bodies and industry are now organizing a timeframe to standardize the fifth generation (5G) technology, which is expected to be between 2016 and 2018, followed by initial deployments around 2020.
As is expected, the network aggregate data rate will be increased by roughly 1000x from 4G to 5G \cite{Andrews-2014, Boccardi-2014}.
To achieve this ambition, 5G communication systems need more nodes per unit area besides more Hz and more bit/s/Hz per node \cite{Andrews-2014}.
Therefore, more and more small cells such as pico/femto/relay cells are being added to the existing network \cite{Andrews-2013}.
In this context, it may not be surprising to expect that in the not too distant future, the number of base stations may exceed the number of cell phone subscribers \cite{Andrews-2013}. A network that consists of a mix of macro cells and small cells is often referred to as a {\em heterogeneous network} (HetNet), or \textit{DenseNets} \cite{Ahmad-2013}.

Research on HetNets dates back to the discussion on femto cells in 2008 \cite{Claussen-2008} and was admitted by 3GPP LTE-A standard in 2011 \cite{3Gpp-2011}. HetNets is also believed to be an important part of the next generation cellular networks. By adding more and more low power small cells, the reuse of spectrum across the space is improved.
At the same time, the number of users competing for resources at each base station is reduced.
Note that, the spectral efficiency of modern access technologies such as LTE is
already very close to the Shannon's limit \cite{Ahmad-2013}.
Therefore, enhancing the network efficiency by densifying the network in the spatial domain rather than user efficiency in the frequency domain would be one important step towards 5G communications.

Due to the scarcity of spectrum, lower power base stations are preferred to be deployed in the same band as macro base stations. Naturally, the interference management in HetNets becomes an unavoidable issue. In the literature, this problem has been discussed from various viewpoints. In the physical layer, the downlink co-channel interference can be modeled as an interference channel. Based on this observation,  an interference canceling block modulation scheme was proposed in \cite{Ayyar-2012}, in which interference can be canceled successively since the covariance matrix of the interference is designed to be rank deficient at each receiver.
In \cite{Grant-2002, Kuchi-2011}, joint detection algorithms and maximum-likelihood based local detections were proposed. In the MAC layer or above, related issues include: 1) frequency reuse techniques such as fractional frequency reuse \cite{Fujii-2008} or soft frequency reuse \cite{Huawei-2005} and optimum combining \cite{Winters-2002}, 2) load balancing and power control schemes such as range extension technique in 3GPP LTE Rel-10 systems, user association schemes in \cite{andrews-user2013,Andrews-balc-2013}, and the proportional optimal power control in \cite{Li-2013},  and 3) fundamental research on interference modeling  \cite{Health-2013,Tabassum-2013,Iyer-2009} that will facilitate interference management.

Among them, the authors of \cite{Iyer-2009} investigated the difference as well as the equivalence among some commonly used interference models for adhoc/sensor networks, such as the additive interference model, the capture threshold model, the protocol model and the interference range model.
As pointed in \cite{Iyer-2009}, different interference models can produce significantly different results.
The uplink intercell interference modeling for HetNets was investigated in \cite{Tabassum-2013}, in which the distribution of the location of scheduled users and the moment generating function (MGF) of their interference were found. Most recently, \cite{Health-2013} studied the downlink interference in a HetNet using stochastic geometry theory, in which every interfering base station, locating outside of a guard zone, follows the Poisson point process (PPP). A dominant interferer was also assumed to locate at the edge of the guard zone. Together with the Gamma approximation method, the Laplace transform of total interference was given, which can be used to evaluate users' success probability and average rate.

Although very important, previous works focused on static adhoc networks or HetNets, where the interferes are fixed. However, the uplink interference in 5G HetNets are produced by users with mobility.
In addition, among those works considering user mobility in HetNets, most of them were investigating how user mobility affected handover performances \cite{Bennis-handover, lopez-ICIC}. As a result, it is still not clear whether users' mobility will change uplink interference model or not, which is the motivation of this paper.

Particularly, since more and more base stations are deployed in the network, each cell becomes smaller and smaller. As a result, the number of users in a cell or an interfering area is very limited. In this case, users' mobility will have more impact on the number of users in a cell, which determines their uplink interferences to other type of users in the same area.

This paper focuses on characterizing the uplink interference in 5G HetNets based on the  \textit{L\'{e}vy flights} \cite{Rhee-levyflights} moving model.
In this model, each user moves one step in every time interval $T_s$.
Formally, when a user moves from one location to another without a directional change or pause, a flight is defined as the longest straight-line distance between its starting point and end point.
Some recent studies on human mobility show that the flight length distributions have a heavy-tail tendency \cite{Lee-levyflights, Rhee-levyflights}.
By normalizing the flight length with a \textit{basic step length} $\Delta$, the flight length turns to be the number of basic steps in each flight.
Therefore, $\Delta$ can be seen as an indicator of user velocity, i.e., a user with a larger $\Delta$ moves more quickly on average.

Under the L\'{e}vy flight mobility model, the number of interferers in an interfering region varies from time to time.
   Particularly, the number of users in the interfering region can be modeled by a Markov process, in which the state transition probability is a function of user mobility.
It is also seen that users may sometimes move out of the macro cell due to mobility.
   To eliminate this kind of boundary effect, we proposed a \textit{modified reflection model}, which can simulate the network using a single macro cell.
     In this model, a user is assumed to re-enter the macro cell from the opposite edge when it reaches the cell edge, without changing its moving direction.
In addition, this paper considers the uplink interference for both closed subscriber group (CSG) femto cells which serve some authenticated users only, and the open subscriber group (OSG) femto cells which admit any user coming into its coverage, by deriving their moment generating functions, mean values and variations.
   It will be seen from simulation results that the uplink interference is actually a constant, regardless how fast users move.
It is also shown that the proposed interference model is useful in evaluating the system performance such as the probability of successful transmission and the average transmission rate.


The rest of this paper is organized as follows. The system model is presented in Section \ref{sec:2}.
   User's average probability of coming into or gonging out of a small cell is presented in Section \ref{sec:3}, based on which the number of users in a small cell is formulated as a Markov Chain in Section \ref{sec:4}.
After that, the uplink interference is presented in terms of its statistics in Section \ref{sec:5}.
As its two applications, the interference model will be used to evaluate user's success probability
and average transmission rate in Section \ref{sec:6}.
The obtained result will also be presented via numerical and Monte Carlo simulation results in Section \ref{sec:7}. Finally, we will conclude this work in section \ref{sec:8}.

\section{System Model}\label{sec:2}
Fig. \ref{fig:net model} presents one macro cell of a heterogeneous network.
The area is covered by a macro-eNB (M-eNB), as well as some low power pico-eNBs, femto-eNBs and relays (collectively referred to as home-eNBs, H-eNBs) \cite{comag-2012, 3Gpp-220}.
The small cells served by these H-eNBs can either be OSG cells or CSG cells.
Denote the radius of the macro cell as $L$, the radius of a small cell of interest (a pico/femto/relay cell, OSG or CSG) as $R$. In general, $L$ is much larger than $R$.

\begin{figure}[h]
\centering
\includegraphics[width=2.8in]{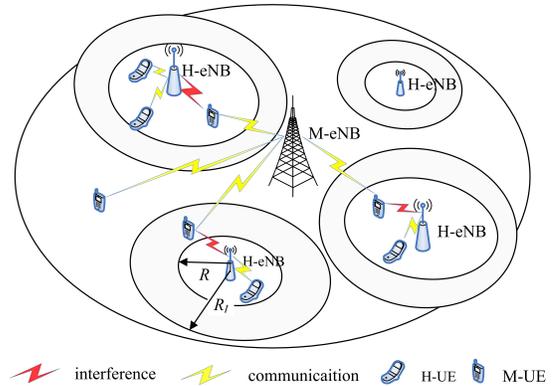}
\caption{5G Heterogeneous Network model.} \label{fig:net model}
\end{figure}

 Users served by H-eNBs and M-eNBs are referred to as the home users (H-UEs) and macro users (M-UEs), respectively.
    Since pico/femto/relay cells are much smaller than macro cells, the transmit power of H-UEs' ($P_t^h$) will be much smaller than that of M-UEs.
As a result, the uplink interference from M-UEs to H-UEs is very strong.
Due to large scale attenuation and small scale fading, the received signal power at an e-NB can be given by
$P_r=\frac{\gamma}{d^\beta}P_t$,
where $d$ is the distance between the user and the e-NB, $\beta\geq2$ is the pathloss exponent, and $\gamma$ is the random channel power gain.
     Without loss of generality, Rayleigh fading model will be used in out simulations.

Due to large scale attenuation, both the desired signal and the interference will be attenuated greatly.
   Similar to the interference range model in \cite{Iyer-2009}, this paper assumes that only the interference from M-UEs within an interfering circle will be considered, as shown by Fig. \ref{fig:net model}.
Denote the radius of the interfering circle as \textit{interfering radius} $R_I$, which is usually larger than the cell radius $R$.
   In addition, it is assumed that users in the same small cell will access to the H-eNB in a time division multiple access (TDMA) manner, so that interference among them is avoided.
Since the transmit power of H-UEs is low and the attenuation is high, interference from other small cells are also neglected.

Assume that there are $N$ users distributed uniformly in the macro cell.
   Assume that all the incoming traffic from the users can be absorbed by the network.
Due to mobility, each user will move to a new location in every time interval $T_s$, according to the \textit{L\'{e}vy flight} model.
   In this model, each move of a user is defined as a \textit{flight}.
The direction of a flight is uniformly distributed among $[0,2\pi)$ and the flight length $X$  follows the power law distribution. Particularly, the probability density function (\textit{pdf}) of $X$ is given by
\begin{equation}\label{eq:powerlawX}
  f_X(x)=\frac{\alpha \Delta^\alpha}{x^{\alpha+1}},\quad x\in[\Delta,+\infty)
\end{equation}
where $\alpha$ falls in between 0.53 and 1.81, as shown by many human mobility traces \cite{Lee-levyflights, Rhee-levyflights}.
By normalizing the flight length using a \textit{basic step length} $\Delta$, one will get $Z=\frac{X}{\Delta}$ with $f_Z(z)=\frac{\alpha}{z^{1+\alpha}},~ z\in[1,+\infty)$.

It is clear that users tend to take longer flights if $\Delta$ is larger.
Therefore, the general moving velocity is determined by the basic step length $\Delta$, for any given $T_s$.
In this sense, this paper will show whether user mobility will affect uplink interference, by investigating the functional relationships between the statistics of uplink interference and  $\Delta$.

To simulate the whole network using a single macro cell, this paper proposed a \textit{modified reflection model}, as shown in Fig. \ref{fig:reflec}.
Assume that a user moves from point $O$ along the direction $\overrightarrow{OQ_1}$.
 Suppose that the flight length is so large that the user tends to leave the macro cell from point $Q_1$. Under the modified reflection model, the user will enter the macro cell again from its opposite point on the cell edge, i.e., point $Q_2$, along the same direction. If the flight is so large that the user can leave the macro cell once again from point $Q_3$, then it will re-enter the macro cell from point $Q_4$, and so on. Under this model, it is noted that the number of users in the macro cell will not change.

\begin{figure}[h]
\centering
\includegraphics[width=2in]{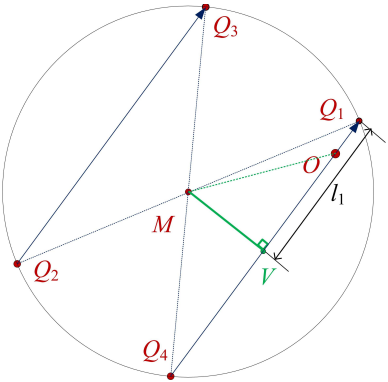}
\caption{Revised reflection model. The user starts from point $O$. $M$ is the center of the macro cell. The user may leave the macro cell from points $Q_1,~Q_3$ and re-enter from points $Q_2,~Q_4$, respectively.}\label{fig:reflec}
\end{figure}

Actually, the number of users in a small cell is a random variable. As a result, the uplink interference is also random.
Let $C_k$ be an arbitrary chosen small cell with radius $R$,  and $\xi_n$ be the number of users in $C_k$ at the beginning of  time interval $[nT_s,(n+1)T_s]$, it can be seen that the process $\{\xi_n,n\geq0\}$ is a Markov chain.
In order to show this, the average probability that a user moves into and goes out of a small cell will be discussed first.

\section{Average Incoming/Outgoing Probability}\label{sec:3}

Since each flight can take any length no shorter than $\Delta$ in any direction, every user outside of a small cell of interest $C_k$ has the chance to come into the cell. Likewise, any user in $C_k$ may move out with some probability.
   For a user who is outside of $C_k$, define the probability that it comes into $C_k$ after one move as its incoming probability.
For a user who lies in $C_k$, define the probability that it goes out of $C_k$ after one move as its outgoing probability.
   By taking average over all possible user locations, the average incoming probability $P_i(R, \Delta)$ and average outgoing probability $P_o(R, \Delta)$ can be obtained, where $R$ is the radius of $C_k$ and $\Delta$ is the basic move length.

Assume that the initial location of each user is uniformly distributed in the macro cell.  It will be shown by the following lemma that, the location of each user is uniformly distributed throughout the operation. In fact, this can be readily understood since each user moves in a pure random way.

\begin{lemma}\label{lem:unif_distri}
   The end point of a random flight will be uniformly distributed in the macro cell, if its start point follows the uniform distribution.
\end{lemma}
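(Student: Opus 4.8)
The plan is to show that the uniform distribution on the macro cell (a disk of radius $L$) is invariant under one step of the modified-reflection Lévy-flight dynamics. The key structural fact to exploit is that a single flight consists of two independent ingredients: a direction $\theta$ drawn uniformly on $[0,2\pi)$ and a length $X$ with the power-law density \eqref{eq:powerlawX}, and that the modified reflection map wraps the straight-line trajectory back into the disk whenever it would exit. I would first argue that it suffices to treat the direction as fixed: condition on $\theta$, prove invariance of the uniform law for the one-dimensional family of maps $T_{\theta,X}$ (translation by the vector of length $X$ in direction $\theta$, followed by reflection), and then average over $\theta$ and $X$, since a mixture of measure-preserving maps still preserves the measure.

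With $\theta$ fixed, I would set up coordinates aligned with $\theta$ so the displacement is purely in one coordinate. The cleanest route is to use the symmetry of the disk: for a fixed chord direction, the family of parallel chords foliates the disk, and along each chord the modified reflection acts as a translation by $X$ modulo the chord length, i.e. as rotation of a circle. Rotation of a circle preserves the (normalized) Lebesgue measure on that circle; integrating this fact over the foliation, weighted by the length measure of each chord, recovers two-dimensional Lebesgue measure, hence the uniform distribution on the disk. The technical content is to verify that the modified reflection really is this circle-translation: a flight that exits at $Q_1$ re-enters at the diametrically opposite boundary point $Q_2$ along the same direction, which is precisely identifying the two endpoints of the chord through the start point in direction $\theta$ — so the chord becomes a circle of circumference equal to the chord length, and the flight is translation by $X$ on it. One must also handle the Jacobian bookkeeping when passing from Cartesian coordinates to the (chord-label, position-along-chord) coordinates; this is a routine change of variables whose Jacobian is constant along each chord.

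The main obstacle, and the place to be careful, is the claim that the reflection identifies each chord's endpoints to form a circle of circumference exactly equal to that chord's length, uniformly across the foliation — in particular checking that "opposite point on the cell edge" in the paper's model means the second intersection of the same chord (not the antipode through the center), so that direction is genuinely preserved and the dynamics on each chord is a clean translation rather than something with a reflection that would break the group structure. If instead the model used true antipodal re-entry, the chord lengths on the two sides would differ and the argument would need an extra symmetry step (pairing a chord with its mirror image through the center). I would resolve this by reading the model as stated — same direction, opposite edge point of the same flight line — under which each chord closes up consistently and the circle-translation picture goes through. A secondary, minor point is integrability: since $X$ can be arbitrarily large, the "modulo chord length" reduction must be justified for all $X$, but this is immediate because translation by any real amount on a circle is still measure-preserving, so no moment condition on $X$ is needed.
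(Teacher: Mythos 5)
Your proposal is correct and shares the paper's top-level strategy: condition on the flight, show that the resulting self-map of the disk is measure-preserving, and then average over the flight distribution (a mixture of measure-preserving maps preserves the measure). Where you genuinely differ is in how the measure-preservation of the conditional map is established. The paper's proof simply declares the map $\boldsymbol{F}_f$ and its inverse to be ``linear'' and asserts that $\boldsymbol{S}_f^{-1}$ has the same area as $\boldsymbol{S}$; this is the crux of the lemma and is left essentially unjustified (the map is a piecewise translation of the disk, not a linear map). Your chord-foliation argument supplies exactly the missing content: for a fixed direction the disk is foliated by parallel chords, the modified reflection closes each chord up into a circle on which the flight acts as translation by $X$, translation preserves length measure on the circle, and integrating over the foliation (with the trivial Jacobian of the rotated coordinates) recovers two-dimensional Lebesgue measure. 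One small correction to your discussion of the ambiguity: the paper's rule is in fact the antipodal one --- the formulas in Section III (e.g.\ $l_1=|VQ_1|=\tfrac12|Q_4Q_1|$ and the return windows of period $2l_1$) show that the trajectory alternates between a chord and its mirror image through the center $M$ --- but those two chords are congruent, not of different lengths as you feared, so the paired two-chord circle carries the same clean translation and your argument closes without extra work.
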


\textit{Proof:} See Appendix \ref{prf:lem1}.

\subsection{Average Outgoing Probability}
\begin{figure}[h]
\centering
\includegraphics[width=2in]{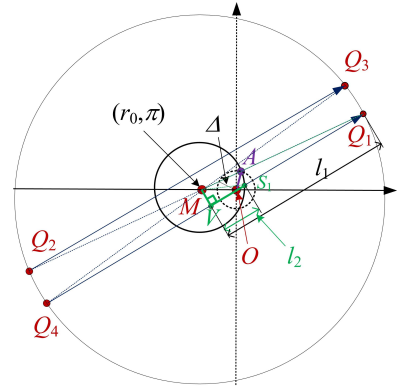}
\caption{The outgoing probability. The user $i$ starts from point $O$, which is also the origin. Both the macro cell and the small cell $C_k$ are centered at point $M$, i.e., point $(r_0, \pi)$. The flight intersect with $C_k$ at point $S_1$, i.e., $(r_\theta, \theta)$,  $l_1=|VQ_1|$, $l_2=|VS_1|$.}\label{fig:pout}
\end{figure}

As shown in Fig. \ref{fig:pout}, a certain user UE$_i$ locates at the origin of the polar coordinate system, i.e., point $O$.
 The center of the macro cell is point $M$,  $(r_0,\pi)$.
There is also a small cell $C_k$ (femto/pico/relay cells, CSG or OSG) centered at point $M$. The radius of cell $C_k$ is $R$.
   Due to the isotropic property of the circular cell, the relative position of a user to $C_k$  depends only on its distance to the cell center.
 Therefore, it is sufficient to consider users at different locations by changing the cell center of $C_k$, i.e., changing $r_0$.

For any given $r_0$, suppose that UE$_i$ moves along direction $\theta$, which will intersect $C_k$ at points  $S_1,~(r_\theta, \theta)$. It is seen that $(r_\theta, \theta)$  satisfies the following equation
\begin{equation}\label{dr:po_rtheta2}
  (r_0+r_\theta \cos \theta)^2+(r_\theta \sin \theta)^2=R^2.
\end{equation}

From (\ref{dr:po_rtheta2}), we have
\begin{equation}\label{dr:po_rtheta3}
  r_\theta=\sqrt{R^2 - r_0^2\sin^2\theta}-r_0\cos\theta.
\end{equation}

Next, the outgoing probability can be solved case by case. In cases 1) to 3), it is assumed that the flight length is relatively small so that UE$_i$ will not move out of the macro cell.  The outgoing probability of going out for large flights is considered in case 4).

\subsubsection{$\Delta<R$}\label{sub_sec_1}
If $0<r_0<R-\Delta$, it is seen that $r_\theta>\Delta$ for every $\theta\in[0,\pi)$. Then UE$_i$ can move out of $C_k$ if the flight length satisfies $X>r_\theta$. Then we have
\begin{equation}\label{rt:po1}\nonumber
  P_o(r_0|0<r_0<R-\Delta)=2\int_{0}^\pi\frac{1}{2\pi} \Pr\{X>r_\theta\} d\theta
\end{equation}
where $P_o(r_0|A)$ represents the outgoing probability as a function of $r_0$ when it is conditioned on event $A$.

If $R-\Delta<r_0<R$, UE$_i$ will be very close to the edge. In this case, UE$_i$ will move out of $C_k$ directly in some scenarios, since every flight is not shorter than $\Delta$.

Define $\theta_1$ as the angle which enables UE$_i$ to reach the cell edge when the flight length is exactly $\Delta$.
Then point $(\Delta,\theta_1)$ lies on the curve defined by (\ref{dr:po_rtheta2}). Thus, $\theta_1$ can be solved as
\begin{equation}\label{dr:theta1}\nonumber
  \theta_1=\arccos{\frac{R^2-r_0^2-\Delta^2}{2\Delta r_0}},\quad \theta_1\in(0,\pi).
\end{equation}

It is clear that $r_\theta<\Delta$ holds true if $\theta\in(-\theta_1,\theta_1)$, which means that UE$_i$ will certainly move out of $C_k$. For any $\theta\notin(-\theta_1,\theta_1)$, the user can move out only if the flight length satisfies $X>r_\theta$. Then the conditional outgoing probability is
\begin{equation}\label{rt:po2}\nonumber
\begin{split}
  P_o(r_0|R-\Delta<r_0<R) \\
        =2\int_0^{\theta_1}\frac{1}{2\pi}d\theta &+2\int_{\theta_1}^\pi\frac{1}{2\pi} \Pr\{X>r_\theta\} d\theta.
\end{split}
\end{equation}

\subsubsection{$R\leq\Delta<2R$} In this situation, UE$_i$ will move out of $C_k$ directly in most directions, except that $\theta \in(\theta_1,2\pi-\theta_1)$ and $r_0>\Delta-R$. We have
\begin{equation}\label{rt:po3}\nonumber
\begin{split}
  &P_o(r_0|0<r_0<\Delta-R)=1,\\
  &P_o(r_0|\Delta-R<r_0<R)\\
   &\qquad~~~~  =2\int_0^{\theta_1}\frac{1}{2\pi}d\theta +2\int_{\theta_1}^\pi\frac{1}{2\pi} \Pr\{X>r_\theta\} d\theta.
\end{split}
\end{equation}

\subsubsection{$\Delta \geq 2R$} \label{sub_sec_3}
In this situation, UE$_i$ will move out of $C_k$ with probability 1, regardless of its moving direction and location, i.e.,
$P_o(r_0|2R<\Delta)=1$.

\subsubsection{The case when flight length is very large}
Finally, It is noted that if the flight length $X$ is very large, it is possible that UE$_i$ will move out of the macro cell from point $Q_1$ and re-enter from point $Q_2$. In fact, UE$_i$ may move out of the macro cell and re-enter for many times. The probability that UE$_i$ will come back to $C_k$ is
\begin{equation}\label{eq:notout}
\begin{split}
  P_o^{re}(r_0)=&\Pr\{~\mathrm{return~to}~ C_k\}\\
        =&\sum_{m=1}^\infty \Pr\{r_\theta + 2ml_1-2l_2<X<r_\theta + 2ml_1\}
\end{split}
\end{equation}
where $l_1=|VQ_1|=\sqrt{L^2-r_0^2\sin^2{\theta}}$ and $l_2=|VS_1|=\sqrt{R^2-r_0^2\sin^2{\theta}}$ are the half chord length within the macro cell and small cell $C_k$, respectively.

By Lemma\ref{lem:unif_distri}, the location of UE$_i$ is uniformly distributed in the macro cell. Thus the probability that $r_0$ is smaller than $x$ is $F_{r_0}(x)=\Pr\{r_0\leq x\}=\frac{x^2}{R^2}$.
Then we can get the \textit{pdf} of $r_0$ as
$f_{r}(x)=\frac{2x}{R^2},~ x\in[0,R]$,
which is independent from the angle.

According to taking average over $r_0$ and following the analysis above, the proposition below summarizes  the average outgoing probability.

\begin{proposition}\label{prop:po}
   The average outgoing probability that a user in $C_k$ will move out of the cell is given by (\ref{rt:pout}), as shown on the top of next page,
\begin{figure*}
\hrulefill
\begin{equation}\label{rt:pout}
  P_o(R,\Delta)=\left\{ \begin{aligned}
             &\frac{2\Delta^\alpha}{\pi R^2} \int_0^{R-\Delta}\int_0^\pi \frac{r_0}{r_\theta^\alpha} d\theta dr_0
+\frac{2}{\pi R^2} \int_{R-\Delta}^R r_0\left(\theta_1+ \int_{\theta_1}^\pi \frac{\Delta^\alpha}{r_\theta^\alpha} d\theta \right)dr_0- P_o^{re}, & &\Delta<R;\\
             &\frac{(\Delta-R)^2}{R^2}+ \frac{2}{\pi R^2} \int_{\Delta-R}^R r_0\left(\theta_1+ \int_{\theta_1}^\pi \frac{\Delta^\alpha}{r_\theta^\alpha} d\theta \right)dr_0- P_o^{re},        & R\leq&\Delta<2R;\\
             &1- P_o^{re},            &~&\Delta\geq 2R.
             \end{aligned} \right.
\end{equation}
\hrulefill
\end{figure*}
where $\theta_1=\arccos{\frac{R^2-r_0^2-\Delta^2}{2\Delta r_0}}$, $l_1=\sqrt{L^2-r_0^2\sin^2\theta}$, $l_2=\sqrt{R^2-r_0^2\sin^2\theta}$ and
 $P_o^{re}=\frac{2\Delta^\alpha}{\pi R^2} \sum_{m=1}^\infty \int_0^R\int_0^\pi \left( \frac{1}{(r_\theta+2ml_1-2l_2)^\alpha} - \frac{1}{(r_\theta+2ml_1)^\alpha} \right) d\theta dr_0$, .
\end{proposition}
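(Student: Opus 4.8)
The plan is to carry out the case-by-case geometric computation sketched just before the statement, then average over the flight direction and the user's radial position, and finally correct for the macro-cell boundary (re-entry) effect. First I would fix a user UE$_i$ at distance $r_0$ from the center $M$ of $C_k$ and fix its flight direction $\theta$. Using the polar description of Fig.~\ref{fig:pout}, the distance from UE$_i$ to the point where the ray in direction $\theta$ exits $C_k$ is $r_\theta=\sqrt{R^2-r_0^2\sin^2\theta}-r_0\cos\theta$, i.e.\ (\ref{dr:po_rtheta3}), obtained from the circle equation (\ref{dr:po_rtheta2}). Ignoring the macro boundary for the moment, a single flight of length $X$ takes the user out of $C_k$ iff $X>r_\theta$; since $X$ has the power-law density (\ref{eq:powerlawX}) on $[\Delta,\infty)$, I would write $\Pr\{X>r_\theta\}=(\Delta/r_\theta)^\alpha$ when $r_\theta\ge\Delta$ and $\Pr\{X>r_\theta\}=1$ when $r_\theta<\Delta$. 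The borderline $r_\theta=\Delta$ occurs at the critical angle $\theta_1=\arccos\frac{R^2-r_0^2-\Delta^2}{2\Delta r_0}$, found by substituting $(r_\theta,\theta)=(\Delta,\theta_1)$ into (\ref{dr:po_rtheta2}).

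Next I would assemble the conditional outgoing probability $P_o(r_0\mid\cdot)$ by integrating $\frac{1}{2\pi}\Pr\{X>r_\theta\}$ over $\theta\in[0,2\pi)$, using the reflection symmetry $\theta\leftrightarrow-\theta$ to reduce to $\frac{1}{\pi}\int_0^\pi$. The three regimes differ only in which radial sub-intervals of $r_0$ force the user out in every direction: for $\Delta<R$ the split is at $r_0=R-\Delta$ (below it, $r_\theta>\Delta$ for all $\theta$); for $R\le\Delta<2R$ it is at $r_0=\Delta-R$ (below it every direction exits, contributing $1$); for $\Delta\ge 2R$ no direction keeps the user inside, so $P_o\equiv1$. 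On the remaining sub-intervals the $\theta$-integral splits at $\theta_1$ into a mass-$1$ piece on $[0,\theta_1]$ and a $(\Delta/r_\theta)^\alpha$ piece on $[\theta_1,\pi]$. Averaging over $r_0$ with density $f_r(x)=2x/R^2$ on $[0,R]$ — which follows from Lemma~\ref{lem:unif_distri} and the assumed uniform initial placement, since then $F_{r_0}(x)=x^2/R^2$ — produces exactly the three leading expressions in (\ref{rt:pout}).

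Finally I would account for the modified reflection model: a flight that leaves $C_k$ may still be long enough to exit the macro cell at the near boundary point $Q_1$, re-enter at the antipodal point $Q_2$, and land back inside $C_k$, possibly after several wrap-arounds. With $l_1=\sqrt{L^2-r_0^2\sin^2\theta}$ and $l_2=\sqrt{R^2-r_0^2\sin^2\theta}$ the half-chords of the ray through the macro cell and through $C_k$, the user is back inside $C_k$ after exactly $m$ wrap-arounds iff $r_\theta+2ml_1-2l_2<X<r_\theta+2ml_1$; since these events are disjoint over $m\ge1$ and each has probability $(\Delta/(r_\theta+2ml_1-2l_2))^\alpha-(\Delta/(r_\theta+2ml_1))^\alpha$, summing and averaging over $\theta$ and $r_0$ as before gives $P_o^{re}$, which must be subtracted from the leading terms to yield (\ref{rt:pout}). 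The main obstacle I anticipate is the bookkeeping in this last step — identifying, for each wrap-around count, the exact interval of flight lengths that returns the user to $C_k$, and verifying that these intervals are pairwise disjoint and together capture every spurious exit — rather than the elementary power-law integrations, which are routine.
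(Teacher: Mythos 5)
Your proposal reproduces the paper's own derivation essentially step for step: the same polar geometry yielding $r_\theta$ and $\theta_1$, the same three $\Delta$-versus-$R$ regimes with the $\theta$-integral split at $\theta_1$, the same radial averaging with density $2r_0/R^2$ from Lemma~\ref{lem:unif_distri}, and the same disjoint wrap-around intervals $(r_\theta+2ml_1-2l_2,\,r_\theta+2ml_1)$ giving the re-entry correction $P_o^{re}$. The approach is correct and matches the paper's.
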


\subsection{Average Incoming Probability}
As shown in Fig. \ref{fig:pin}, UE$_i$ locates at the origin (point $O$) of the polar coordinate system. Its distance to the center of the macro cell (point $M$, also the center of a chosen small cell ($C_k$) is $d_0$.  To evaluate the probability that UE$_i$ comes into $C_k$ when UE$_i$ locates at different locations, it is equivalent to fix the position of UE$_i$ while changing the position of the cell center (point $M$). In addition, since their relative position depends only on the distance between them, only $d_0$ needs to be changed.

\begin{figure}[h]
\centering
\includegraphics[width=2in]{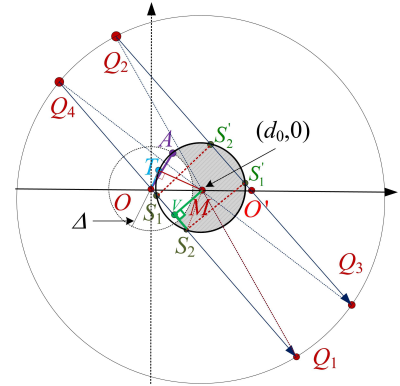}
\caption{The incoming probability. UE$_i$ starts from point $O$, which is the origin. Both the macro cell and the small cell $C_k$ are centered at point $M$, i.e., point $(d_0, 0)$. $\rho_1=|OS_1|,~\rho_2=|OS_2|$, $l_1=|VQ_1|$, $\theta_2=\angle AOM,~\theta_3=\angle TOM$.}\label{fig:pin}
\end{figure}

Any point $(\rho_\theta,\theta)$ locates on the edge of $C_k$ must satisfy
\begin{equation}\label{dr:rtheta_in2}
  (d_0-\rho_\theta\cos\theta)^2+(\rho_\theta\sin\theta)^2=R^2.
\end{equation}

Solving $\rho_\theta$ from this equation, we have two roots of $\rho$
\begin{equation}\label{rt:rou1rou2}\nonumber
  \begin{split}
     \rho_1&=d_0\cos\theta-\sqrt{R^2 - d_0^2\sin^2\theta} \textrm{~~~and}\\
     \rho_2&=d_0\cos\theta+\sqrt{R^2 - d_0^2\sin^2\theta}
  \end{split}
\end{equation}
which corresponds to segment $|OS_1|$ and $|OS_2|$ in Fig. \ref{fig:pin}, respectively.

Define $\theta_2$ as the angular coordinate of the intersection point $A$, i.e., $\angle AOM$. It is seen that  $(\Delta,\theta_2)$ satisfies equation (\ref{dr:rtheta_in2}). Then we have $\theta_2=\arccos{\frac{\Delta^2+d_0^2-R^2}{2\Delta d_0}}$ $\theta_2$ by solving the equation.

Define $\theta_3$ as the angular coordinate of the tangent line of circle $C_k$ which passes the origin, i.e., $\theta_3=\angle TOM$. We have
$\theta_3=\arcsin\frac{R}{d_0}$.

It is noted that $\theta_3\geq\theta_2$ holds true for any $d_0\in[R,L]$. Particularly, solving $d_0$ from $\theta_2=\theta_3$, i.e.,
\begin{equation}\label{dr:d0*}\nonumber
  \arccos{\frac{\Delta^2+d_0^2-R^2}{2\Delta d_0}}=\arcsin\frac{R}{d_0}
\end{equation}
we know that $d_0^*=\sqrt{\Delta^2+R^2}$ satisfies $\theta_2=\theta_3$.

As shown in Fig. \ref{fig:pin}, the incoming probability is also the probability that the end point of a flight falls into  $C_k$. In the following part, we will discuss the incoming probability case by case.

First, assume that the flight length is relatively small and UE$_i$ will come into $C_k$ directly.

\subsubsection{$\Delta<2R$}
In this case, UE$_i$ has non-zero probability to enter $C_k$ if $d_0>R$.

First, if $R<d_0<d_0^*$, it is seen that $\rho_1<\Delta<\rho_2$ if $\theta\in(-\theta_2,\theta_2)$ and $\rho_1<\rho_2<\Delta$ if $\theta\in(-\theta_3, -\theta_2)\cup(\theta_2,\theta_3)$. Then UE$_i$ will be in $C_k$  if the direction of the flight satisfies $\theta\in(-\theta_2,\theta_2)$ and the flight length satisfies $\Delta<X<\rho_2$.
The corresponding conditional incoming probability is
\begin{equation}\label{rt:pin11}\nonumber
\begin{split}
  P_i( d_0| R<d_0<\sqrt{\Delta^2+R^2})&\\
         =2\int_0^{\theta_2} &\frac{1}{2\pi} \Pr\{ \Delta<X<\rho_2 \} d\theta .
\end{split}
\end{equation}

Second, if $d_0^*\leq d_0<R+\Delta$, it is seen that $\rho_1<\Delta<\rho_2$ holds true if $\theta\in(-\theta_2,\theta_2)$. It is also seen that $\Delta<\rho_1<\rho_2$ holds true if $\theta\in(-\theta_3, -\theta_2)\cup(\theta_2,\theta_3)$.  Therefore, UE$_i$ will come into $C_k$ if $\theta\in(-\theta_3,-\theta_2)\cup (\theta_2,\theta_3)$ and the flight length satisfies $\rho_1<X<\rho_2$,  or $\theta\in(-\theta_2,\theta_2)$ and the flight length satisfies $\Delta<X<\rho_2$. The corresponding probability is,
\begin{equation}\label{rt:pin11}\nonumber
\begin{split}
P_i( d_0| \sqrt{\Delta^2+R^2}<d_0<R+\Delta)&\\
            =2\int_0^{\theta_2} \frac{1}{2\pi}\Pr\{ \Delta  & <X<\rho_2 \} d\theta \\
             + 2\int_{\theta_2}^{\theta_3} \frac{1}{2\pi}\Pr\{ \rho_1 &  <X<\rho_2 \} d\theta.
\end{split}
\end{equation}

Third, if $d_0\geq R+\Delta$,  then $\Delta<\rho_1<\rho_2$ holds true for any $\theta\in(-\theta_3,\theta_3)$ and the conditional incoming probability is
\begin{equation}\label{rt:pin11}\nonumber
  P_i( d_0|R+\Delta \leq d_0)= 2\int_{0}^{\theta_3} \frac{1}{2\pi} \Pr\{ \rho_1<X<\rho_2 \} d\theta.
\end{equation}

\subsubsection{$\Delta\geq 2R$}
In this case, UE$_i$ will certainly move across $C_k$ unless $d_0$ is larger than $\Delta-R$. Likewise, the incoming probability can also be obtained, by replacing the lower limit of integrals on $d_0$ with $\Delta-R$.

In addition to what discussed above, UE$_i$ may also come into $C_k$ indirectly. For example, UE$_i$ starts from point $O$ along direction $\overrightarrow{OQ_1}$. If the flight is very large, UE$_i$ may leave the macro cell from point $Q_1$ and re-enter at point $Q_2$ according to the modified reflection model. It can even leave the macro cell again from point $Q_3$ and re-enter at point $Q_4$, and so on. In this case, the probability that UE$_i$ will come into $C_k$ is given by
\begin{equation}\label{eq:pin_larg_f}
\begin{split}
  \Pr \{\textrm{come~in~indirectly} \}& \\
        =\sum_{m=1}^\infty\Pr\{ 2ml_1 & +\rho_1<X<2ml_1+\rho_2 \}
\end{split}
\end{equation}
where $l_1=|VQ_1|=\frac{1}{2}|Q_4Q_1|=\sqrt{L^2-d_0^2\sin^2\theta}$ is the half chord length.

If the user moves along direction $\overrightarrow{OQ_4}$ instead, this probability turns to be
\begin{equation}\label{eq:pin_larg_f1}
\begin{split}
  \Pr \{ \textrm{come~in~indirectly} \}&\\
        =\sum_{m=1}^\infty \Pr\{ 2ml_1 & -\rho_2<X<2ml_1-\rho_1 \}.
\end{split}
\end{equation}

Finally, by taking the average over $d_0$ and $\theta$, we obtain the incoming probability as follows.

\begin{proposition}\label{prop:pi}
   The average probability that UE$_i$ will come into the cell of interest is given by (\ref{rt:pin}), as shown on the top of next page.
\begin{figure*}
\hrulefill
   \begin{equation}\label{rt:pin}
   \begin{split}
     P_i(R,\Delta)=&\frac{2\Delta^\alpha}{\pi L^2}\int_{\max(R,\Delta-R)}^{\sqrt{\Delta^2+R^2}}
       \int_0^{\theta_2} \left(\frac{d_0}{\Delta^\alpha}-\frac{d_0}{\rho_2^\alpha} \right) d\theta d d_0
     +\frac{2\Delta^\alpha}{\pi L^2}\int_{\sqrt{\Delta^2+R^2}}^{\Delta+R}
       \int_{0}^{\theta_3} \left(\frac{d_0}{\rho_1^\alpha}-\frac{d_0}{\rho_2^\alpha} \right) d\theta d d_0\\
     +&\frac{2\Delta^\alpha}{\pi L^2}\int_{\Delta+R}^{L}
       \int_{0}^{\theta_3} \left(\frac{d_0}{\rho_1^\alpha}-\frac{d_0}{\rho_2^\alpha} \right) d\theta d d_0
     +\frac{2\Delta^\alpha}{\pi L^2}\int_{R}^{L}
       \int_{0}^{\theta_3} \left(\frac{d_0}{(2ml_1+\rho_1)^\alpha}-\frac{d_0}{(2ml_1+\rho_2)^\alpha} \right) d\theta d d_0\\
     +&\frac{2\Delta^\alpha}{\pi L^2}\int_{R}^{L}
       \int_{\pi-\theta_3}^{\pi} \left(\frac{d_0}{(2ml_1-\rho_1)^\alpha}-\frac{d_0}{(2ml_1-\rho_1)^\alpha} \right) d\theta d d_0.
   \end{split}
   \end{equation}
\hrulefill
\end{figure*}

\end{proposition}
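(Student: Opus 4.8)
The plan is to obtain $P_i(R,\Delta)$ by conditioning on the flight direction $\theta$ and on the distance $d_0$ from UE$_i$ to the small-cell center $M$, evaluating the resulting conditional incoming probability geometrically, and then averaging. For fixed $\theta$ and $d_0$, the end point $(X,\theta)$ of a flight of length $X$ lands in $C_k$ if and only if $X$ lies strictly between the two distances at which the ray from the origin in direction $\theta$ meets the boundary circle \eqref{dr:rtheta_in2} --- the roots $\rho_1$ and $\rho_2$ --- \emph{and} $X\ge\Delta$, since every flight has length at least $\Delta$. These roots are real only for $|\theta|\le\theta_3=\arcsin(R/d_0)$, which is why every $\theta$-integral is truncated at $\theta_3$; for $|\theta|>\theta_3$ the ray misses $C_k$ altogether and contributes nothing.

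First I would treat \emph{direct} entry. The admissible length interval is $(\Delta,\rho_2)$ when $\rho_1<\Delta<\rho_2$ and $(\rho_1,\rho_2)$ when $\Delta\le\rho_1$; the switch occurs at the angle $\theta_2=\arccos\frac{\Delta^2+d_0^2-R^2}{2\Delta d_0}$ at which a flight of length exactly $\Delta$ touches the circle, while the global configuration changes at $d_0=R$, at $d_0=d_0^\ast=\sqrt{\Delta^2+R^2}$ (where $\theta_2=\theta_3$), and at $d_0=R+\Delta$, with the innermost radius $R$ replaced by $\Delta-R$ once $\Delta\ge 2R$. Working through these regimes --- and noting that for $\theta_2<|\theta|<\theta_3$ with $d_0<d_0^\ast$ the entire chord of $C_k$ sits inside the first basic step, so those directions contribute nothing --- the conditional probabilities follow directly from \eqref{eq:powerlawX} as $\Pr\{\Delta<X<\rho_2\}=\Delta^\alpha(\Delta^{-\alpha}-\rho_2^{-\alpha})$ and $\Pr\{\rho_1<X<\rho_2\}=\Delta^\alpha(\rho_1^{-\alpha}-\rho_2^{-\alpha})$, which are exactly the integrands of the first three integrals of \eqref{rt:pin}.

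Next I would add the \emph{indirect} entries created by the modified reflection model: after $m$ wrap-arounds a flight of length $X$ leaves the user at arc-distance $X-2ml_1$ from the origin, where $l_1=\sqrt{L^2-d_0^2\sin^2\theta}$ is the macro-cell half-chord. Landing in $C_k$ then requires $2ml_1+\rho_1<X<2ml_1+\rho_2$ when the flight is aimed toward $M$ ($|\theta|<\theta_3$), which is \eqref{eq:pin_larg_f}, or $2ml_1-\rho_2<X<2ml_1-\rho_1$ when it is aimed away from $M$ ($\pi-\theta_3<\theta<\pi$), which is \eqref{eq:pin_larg_f1}; summing the power-law tail over $m\ge1$ produces the last two integrals of \eqref{rt:pin}. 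Finally I would average: the direction is uniform on $[0,2\pi)$, and by the $\theta\mapsto-\theta$ symmetry $\frac1{2\pi}\int_0^{2\pi}$ collapses to $\frac1\pi\int_0^{\theta_3}$ (or $\frac1\pi\int_0^{\theta_2}$ on the innermost arc); by Lemma~\ref{lem:unif_distri} the distance $d_0$ has density $2d_0/L^2$ on $[0,L]$, so each term picks up the prefactor $\frac{2\Delta^\alpha}{\pi L^2}$ once the common factor $\Delta^\alpha$ is pulled out of the probabilities. Assembling the pieces over the appropriate $d_0$-ranges gives \eqref{rt:pin}.

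The main obstacle is the case bookkeeping in the second step: one must verify, in each $d_0$-regime, which arc of directions contributes the interval $(\Delta,\rho_2)$, which contributes $(\rho_1,\rho_2)$, and which contributes nothing, and one must confirm that the direct and indirect events are disjoint so that their probabilities simply add. The reflection geometry in the third step is also delicate --- in particular pinning down the correct direction window $(\pi-\theta_3,\pi)$ for the ``come-around-from-behind'' branch and checking that $l_1$ is the right half-chord uniformly in $m$.
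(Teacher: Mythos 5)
Your proposal is correct and takes essentially the same route as the paper: the same boundary roots $\rho_1,\rho_2$ of (\ref{dr:rtheta_in2}), the same critical angles $\theta_2,\theta_3$ and break distances $R$, $\sqrt{\Delta^2+R^2}$, $\Delta+R$ (with $\Delta-R$ when $\Delta\ge 2R$), the same wrap-around re-entry terms (\ref{eq:pin_larg_f})--(\ref{eq:pin_larg_f1}) summed over $m$, and the same averaging over the uniform direction and the radial density $2d_0/L^2$. The only caveat is cosmetic: in the regime $\sqrt{\Delta^2+R^2}<d_0<\Delta+R$ your (and the paper's in-text) case analysis splits the angular integral at $\theta_2$ into a $\bigl(\Delta^{-\alpha}-\rho_2^{-\alpha}\bigr)$ part and a $\bigl(\rho_1^{-\alpha}-\rho_2^{-\alpha}\bigr)$ part, whereas the displayed formula (\ref{rt:pin}) writes that regime as a single $\int_0^{\theta_3}\bigl(\rho_1^{-\alpha}-\rho_2^{-\alpha}\bigr)$ term (and contains typographical slips in its last two terms), so your claim of matching those integrands ``exactly'' reflects imprecision in the printed equation rather than any gap in your derivation.
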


\section{The Number of Users in $C_k$}\label{sec:4}
Let $C_k$ be an arbitrary small cell of interest and $R$ be its radius.
   Denote the number of users in $C_k$ at the beginning of time interval $[nT_s, (n+1)T_s]$ as $\xi_n$.
Due to user mobility, $\xi_n$ will be a random variable. In fact, $\xi_n$ dominates the number of interferers in the uplink. In this section, the stochastic characteristics of $\xi_n$ will be investigated.

\subsection{Queueing Model Formulation}
Assume that the users leave $C_k$ at the beginning of each time interval, which is denoted by $n^+$, and arrive at $C_k$ at the end of each interval, i.e., $n^-$. Specifically, $n^+=\lim_{t\rightarrow0^+}(nT+t)$ and $n^-=\lim_{t\rightarrow0^-}(nT+t)$.

 By its definition, $\xi_{n}$ is the number of users in $C_k$ at time $n^+$, where users arriving at $C_k$ between $(n^-,n]$ are included, and users leaving $C_k$ between $(n,n^+]$ are not included.

As shown in the previous section, $P_i(R,\Delta)$ is the average incoming probability and $P_o(R,\Delta)$ is the outgoing probabilities.
   Although the probability of coming into or moving out of a cell is different for different users and different locations, we assume that each user outside of $C_k$ may come into it with probability $P_i(R,\Delta)$, and each user in $C_k$ will leave with probability $P_o(R,\Delta)$, in the average sense.
In the following part, we will denoted $P_i(R,\Delta)$ and $P_o(R,\Delta)$ by $P_i$ and $P_o$ for notation simplicity.

If $\xi_{n-1}=k$, we will have $N-k$ users outside of $C_k$. Define the probability that there will be $j$ users coming into the cell at time $n^-$ as
\begin{equation}\label{df:nuj}\nonumber
\begin{split}
  \nu(j,N-k)
  &=C_{N-k}^j P_i^j (1-P_i^j)^{N-k-j}
\end{split}
\end{equation}
where  $C_n^k=\binom{n}{k}=\frac{n(n-1)(n-2)\cdots(n-k+1)}{k(k-1)(k-2)\cdots 1}$ is the combination function and $j=0,1,2,\cdots,N-k$.

Likewise, define the probability that there are $j$ users leaving the cell at time $n^+$ as
\begin{equation}\label{df:nuj}\nonumber
\begin{split}
  \mu(j,k)
  &=C_k^j P_o^j (1-P_o)^{k-j}
\end{split}
\end{equation}
where $j=0,1,2,\cdots,k$.

Therefore, the transition probability of the Markov chain $\{\xi_n,n\geq0\}$ is
\begin{equation}\label{dr:pij}\nonumber
\begin{split}
  p_{kj}=\Pr\{ \xi_n=j|\xi_{n-1}=k \}& \\
  =\sum_{r=\max(0,k-j)}^{\min(k,N-j)}& \mu(r,k)\nu(j+r-k,N-k)
\end{split}
\end{equation}
where $0\leq j \leq N-1$ and $0\leq k\leq N$.

Note that the upper limit of the summation can also be expressed by: $N-j=k-(j-(N-k))$.  That is, $j-(N-k)$ is  the gap to the goal of $j$ users on condition that all of other $N-k$ users will coming in, and can only be filled by users who will not leave $C_k$. Therefore, the maximum users can leave $C_k$ is at most $k-(j-(N-k))=N-j$.

With $\textbf{P}=\{p_{kj}\}_{(N+1)\times(N+1)}$, all the statistics of $\{\xi_n,n\geq0\}$ are hence determined.

\subsection{Stationary Distribution of $\xi_n$}
Since both $P_i$ and $P_o$ are positive and smaller than $1$, and the Markov chain has finite states, one can readily show that the Markov chain $\{\xi_n,n\geq0\}$ considered here has a stationary distribution $\bm{\pi}=\{\pi_j,0\leq j\leq N\}$.

Define the probability generating function (PGF) of $\bm \pi$ as
$\xi(z)=\sum_{j=0}^N\pi_j z^j$,
which will be given by the following theorem.
\begin{theorem}\label{th:1}
   The PGF of the stationary distribution of $\xi_n$, i.e., the number of users in $C_k$, is given by
   \begin{equation}\label{rt:xiPGF}
     \xi(z)=\left( \frac{P_i z}{P_i+P_o} + \frac{P_o}{P_i+P_o} \right)^N.
   \end{equation}
\end{theorem}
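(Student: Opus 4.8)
The plan is to avoid grinding on the matrix $\mathbf{P}$ directly and instead exploit the structure hidden in the model: under the stated averaging assumption, each of the $N$ users performs an \emph{independent} two‑state walk on $\{\text{inside }C_k,\ \text{outside }C_k\}$, with transition probability $P_o$ for inside$\to$outside and $P_i$ for outside$\to$inside. Writing the per‑user indicator $Y_i^{(n)}\in\{0,1\}$, the vector $(Y_1^{(n)},\dots,Y_N^{(n)})$ is a product of $N$ i.i.d.\ two‑state chains, and $\xi_n=\sum_{i=1}^N Y_i^{(n)}$. The per‑user chain has stationary ``inside'' mass $P_i/(P_i+P_o)$; a product of stationary laws is stationary for the product chain, hence in steady state the $Y_i$ are i.i.d.\ Bernoulli and $\xi_n\sim\mathrm{Bin}\!\big(N,\tfrac{P_i}{P_i+P_o}\big)$, whose PGF is exactly the claimed expression. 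I would present this as the conceptual argument and then back it with an explicit check against $\mathbf{P}=\{p_{kj}\}$ as defined in the paper.

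For that check, the key computational ingredient is the conditional PGF of $\xi_n$ given $\xi_{n-1}=k$. Given $k$ interior users, they shed $r\sim\mathrm{Bin}(k,P_o)$ of their number; independently the $N-k$ exterior users contribute $s\sim\mathrm{Bin}(N-k,P_i)$ newcomers; and crucially a user that leaves in this slot is \emph{not} in the pool that may enter in the same slot — which is precisely what the summation limits in $p_{kj}$ encode. Hence $\xi_n=(k-r)+s$ and
\begin{equation*}
\mathbb{E}\!\left[z^{\xi_n}\mid\xi_{n-1}=k\right]=\big(P_o+(1-P_o)z\big)^{k}\,\big(1-P_i+P_i z\big)^{N-k}.
\end{equation*}

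With this identity I would verify $\bm\pi\mathbf{P}=\bm\pi$ for the candidate $\pi_k=\binom{N}{k}p^kq^{N-k}$, $p=P_i/(P_i+P_o)$, $q=P_o/(P_i+P_o)$, entirely through generating functions:
\begin{align*}
\sum_{k=0}^{N}\pi_k\,\mathbb{E}\!\left[z^{\xi_n}\mid\xi_{n-1}=k\right]
&=\sum_{k=0}^{N}\binom{N}{k}\big(p(P_o+(1-P_o)z)\big)^{k}\big(q(1-P_i+P_i z)\big)^{N-k}\\
&=\Big[p\big(P_o+(1-P_o)z\big)+q\big(1-P_i+P_i z\big)\Big]^{N}
\end{align*}
by the binomial theorem; it then remains to collapse the bracketed linear function of $z$, whose constant term is $pP_o+q(1-P_i)=\frac{P_iP_o+P_o-P_iP_o}{P_i+P_o}=q$ and whose $z$‑coefficient is $p(1-P_o)+qP_i=\frac{P_i-P_iP_o+P_iP_o}{P_i+P_o}=p$, so the sum equals $(q+pz)^N$. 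Since $0<P_i,P_o<1$ makes the finite chain irreducible (and aperiodic), its stationary law is unique; therefore $\bm\pi$ is that law and $\xi(z)=(q+pz)^N=\big(\tfrac{P_i z}{P_i+P_o}+\tfrac{P_o}{P_i+P_o}\big)^N$.

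I do not anticipate a genuine obstacle here; the only delicate points are the bookkeeping in the conditional‑PGF step — in particular the ``no return within a slot'' feature, which must be read off correctly from $p_{kj}$ so that the interior contribution depends only on $P_o$ and not on $P_i$ — and the appeal to uniqueness of the stationary distribution, which the paper has already tacitly invoked when asserting that a stationary distribution exists.
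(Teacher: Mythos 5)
Your proposal is correct, but it reaches the result by a different route than the paper. The paper works directly with the stationary equations $\pi_j=\sum_k\pi_k p_{kj}$, multiplies by $z^j$, and after reordering the sums arrives at the same key identity you use, $\sum_k\pi_k\bigl(P_o+(1-P_o)z\bigr)^k\bigl(P_iz+1-P_i\bigr)^{N-k}$; it then rewrites this as a functional equation, $\xi(z)=(P_iz+1-P_i)^N\,\xi\!\left(\tfrac{P_o+(1-P_o)z}{P_iz+1-P_i}\right)$, and asserts that solving it yields the binomial PGF, leaving the answer to be extracted (and its uniqueness as a solution implicitly assumed). You instead (i) give a conceptual argument the paper does not: under the averaging assumption each of the $N$ users is an independent two-state chain with rates $P_o$, $P_i$, so the stationary law of $\xi_n$ is immediately $\mathrm{Bin}\bigl(N,\tfrac{P_i}{P_i+P_o}\bigr)$; and (ii) back it by guess-and-verify, checking $\bm\pi\mathbf{P}=\bm\pi$ for the binomial candidate via the conditional PGF $\mathbb{E}[z^{\xi_n}\mid\xi_{n-1}=k]=(P_o+(1-P_o)z)^k(1-P_i+P_iz)^{N-k}$ and the binomial theorem, then invoking irreducibility/aperiodicity for uniqueness. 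Your conditional-PGF step is exactly the paper's intermediate computation read as a transition operator, so the verification halves overlap; the gain of your version is that the per-user decomposition explains \emph{why} the answer is binomial, and the explicit uniqueness appeal closes the small logical gap the paper leaves when it says ``the theorem is established by solving $\xi(z)$ from the above equation'' without arguing that the functional equation has a unique admissible solution. The paper's route, conversely, derives the form of $\xi(z)$ without having to guess it in advance. Both arguments are sound and rest on the same model assumptions (independent departures from the $k$ interior users and arrivals from the $N-k$ exterior users within a slot).
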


\begin{proof}
Firstly, the stationary distribution $\bm\pi$ satisfies the following equations.
\begin{equation}\label{eq:stat}\nonumber
  \bm{\pi}\textbf{P}=\bm{\pi},\quad \bm{\pi}\textbf{e}=1
\end{equation}
where $\textbf{e}$ is a row vector of ones.

For the $j$-th element of stationary distribution $\pi_j$, we have
\begin{equation}\label{eq:statieq}\nonumber
  \pi_j=\sum_{k=0}^N \pi_k p_{kj},\quad j=0,1,\cdots,N.
\end{equation}

By multiplying $z^j$ on both sides and take the summation from $1$ to $N$, we have
\begin{equation}\label{dr:xizzzz}\nonumber
  \begin{split}
    & \xi(z) = \sum_{j=1}^N \pi_j z^j \\
      \stackrel{(a)}{=} &\sum_{k=0}^N \pi_k \left(\sum_{r=0}^{k} \sum_{j=k-r}^{N-r} z^j \mu(r,k)\nu(j+r-k,N-k)\right)\\
      \stackrel{(b)}{=} &\sum_{k=0}^N \pi_k \left(\sum_{r=0}^{k} \sum_{i=0}^{N-k} z^{i+k-r} \mu(r,k)\nu(i,N-k)\right)\\
      = &\sum_{k=0}^N \pi_k (P_o+(1-P_o)z)^k (P_iz+1-P_i)^{N-k}\\
      = &(P_iz+1-P_i)^{N} \xi\left(\frac{P_o+(1-P_o)z}{P_iz+1-P_i}\right),
  \end{split}
\end{equation}
where the order of the summations is changed in (a) and variable substitution $i=j+r-k$ is used in (b).

Then the theorem is established by solving $\xi(z)$ from the above equation.
\end{proof}

\begin{remark}
   Using the polynomial expansion to $\xi(z)$ we will have
   \begin{equation}\label{xizexpan}\nonumber
     \xi(z)=\sum_{j=0}^N C_N^j \left(\frac{P_i z}{P_i+P_o}\right)^j \left(\frac{P_o}{P_i+P_o}\right)^{N-j}.
   \end{equation}

   It is clear that $\pi_j=C_N^i \left(\frac{P_i}{P_i+P_o}\right)^j \left( \frac{P_o}{P_i+P_o}\right)^{N-j}$, which means that the number of users in $C_k$, i.e., $\xi_n$ is a Binomial distributed random number in the limit sense.
\end{remark}

\begin{remark}
   Denote $\eta=\frac{P_i}{P_i+P_o}$ and $\lambda=N\eta$. It is well known that Binomial distribution can be approximated by Poisson distribution when $N$ is very large and $\eta$ is very small, which will make further analysis easier.
\end{remark}

It is known that the mean and variance of a random variable $X$ are related with its PGF $G_X(z)$ through following equations
\begin{equation}\label{eq:EDvsPGF}\nonumber
  \begin{split}
    \mathbb{E}[X] &= G_X'(z)|_{z=1} \\
    \mathbb{D}[X] &= G_X''(z)-\left(G_X'(z)\right)^2+G_X'(z)|_{z=1}.
  \end{split}
\end{equation}

Then the statistics of the number of users in $C_k$ are given by the following proposition.

\begin{proposition}\label{prop:edxi}
   The average and the variance of number of users in $C_k$ are given by, respectively
   \begin{equation}\label{rt:EDxi}
     \begin{split}
       \mathbb{E}[\xi_n] & = \frac{NP_i}{P_i+P_o} \\
       \mathbb{D}[\xi_n] & = \frac{NP_iP_o}{(P_i+P_o)^2}.
     \end{split}
   \end{equation}
\end{proposition}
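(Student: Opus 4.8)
The plan is to apply the moment--PGF relations stated just before the proposition directly to the closed form $\xi(z)=\left(\frac{P_i z}{P_i+P_o}+\frac{P_o}{P_i+P_o}\right)^N$ obtained in Theorem~\ref{th:1}. To keep the algebra transparent I would first abbreviate $\eta=\frac{P_i}{P_i+P_o}$, so that $1-\eta=\frac{P_o}{P_i+P_o}$ and $\xi(z)=(\eta z+1-\eta)^N$. This is already recognizable as the PGF of a $\mathrm{Binomial}(N,\eta)$ law (as noted in the remark following Theorem~\ref{th:1}), and the two claimed identities are precisely its mean $N\eta$ and variance $N\eta(1-\eta)$ after re-substituting $\eta$.

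For the mean I would differentiate once: $\xi'(z)=N\eta(\eta z+1-\eta)^{N-1}$. Evaluating at $z=1$, where $\eta z+1-\eta=1$, gives $\mathbb{E}[\xi_n]=\xi'(1)=N\eta=\frac{NP_i}{P_i+P_o}$. For the variance I would differentiate once more, $\xi''(z)=N(N-1)\eta^2(\eta z+1-\eta)^{N-2}$, so $\xi''(1)=N(N-1)\eta^2$, and then substitute into $\mathbb{D}[\xi_n]=\xi''(1)-(\xi'(1))^2+\xi'(1)$ to obtain $N(N-1)\eta^2-N^2\eta^2+N\eta=N\eta-N\eta^2=N\eta(1-\eta)=\frac{NP_iP_o}{(P_i+P_o)^2}$. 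Expanding $\eta$ back out in both expressions yields exactly (\ref{rt:EDxi}).

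There is no genuine obstacle here; the computation is a two-line differentiation. The only points that need a little care are reading the variance formula correctly --- the expression $G_X''(z)-(G_X'(z))^2+G_X'(z)\big|_{z=1}$ is to be understood with all three terms evaluated at $z=1$ --- and observing that $\eta z+1-\eta$ collapses to $1$ at $z=1$, which makes every evaluation immediate. As an alternative one could bypass the differentiation entirely and simply invoke the identification of $\xi_n$ with a $\mathrm{Binomial}(N,\eta)$ random variable from that same remark, whose first two moments are standard.
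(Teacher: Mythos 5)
Your proposal is correct and is exactly the route the paper intends: it states the PGF--moment relations immediately before the proposition and lets the result follow by differentiating the closed-form $\xi(z)$ from Theorem~\ref{th:1} at $z=1$, which is what you do (your algebra checks out). The alternative you mention---reading off the moments from the $\mathrm{Binomial}(N,\eta)$ identification in the remark---is equivalent and equally acceptable.
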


\begin{remark}
   Note that both $P_i$ and $P_o$ are functions of basic step length $\Delta$, which is an index of moving velocity. Therefore, both $\mathbb{E}[\xi_n]$ and $\mathbb{D}[\xi_n]$ are also functions of user velocity.
\end{remark}

\section{The Randomness of Uplink Interference}\label{sec:5}
Usually, M-UEs transmit power of is  much higher than that of H-UEs since M-UEs are very far from the M-eNB.
 As a result, M-UEs' uplink signals will be a great interference to the H-UEs nearby.
In addition, this uplink interference will change randomly along time due to the following reasons.

First, each interferer is located randomly and moves randomly.
 Therefore, their distances to the interfered H-eNB are also random, which introduces uncertainty to the interference.
Second, interfering signals suffer from small scale fading, which vary quickly along time.
   Last but not the least, the number of interferers is random due to user mobility.
Particularly, its fluctuation is further accelerated by the miniaturization of cells. As a result, the uplink interference also has a `fading' property.

However, it should be noted that the `fading' of the uplink interference caused by users' mobility is a kind of large scale fading and a slow fading.
   Generally speaking, the velocity of a user is 3 km/h for pedestrians and about 120 km/h if the user is in a vehicle.
Therefore, the flight time is relatively large, which makes the fluctuation of the uplink interference much slower than small scale fading.

In the following part, the fading property of uplink interference will be characterized in terms of distribution and statistic moments, based on which the impact of user mobility on uplink interference can be revealed.

\subsection{Uplink Interference to CSG Femto cells}
In a CSG femto cell $C_k$ with radius $R$, only some authenticated users within its cell coverage are allowed to communicate with the H-eNB. Those unauthenticated  UEs have to be linked to the M-eNB, even if it is in $C_k$. As shown in Fig. \ref{fig:net model}, each UE within the circle of interfering radius $R_I$, which is referred to as $C'$, is an interferer to femto UEs in $C_k$.

Let $\xi_{\textrm{iii}}$ be the number of M-UEs in the interfering circle $C'$ in the $n$-th time interval. Thus it is a Binomial distributed random variable with its PGF $\xi_3(z)$ given by (\ref{rt:xiPGF}), Theorem \ref{th:1}.

Denote the distance between M-UE$_j$ and the femto e-NB as $d_j$ where $d_j>1$ m is assumed.
   Let $\gamma_j$ be the small scale fading power gain.
Let $F_\gamma(x)$  be its cumulative distribution function (CDF), $P_\gamma=\mathbb{E}_\gamma[\gamma_j]$ is the average power gain and $P_\gamma^{(2)}=\mathbb{E}_\gamma[\gamma_j^2]$ is the second order moment.
   Denote M-UEs' transmit power as $P_t^m$, it is seen that the instantaneous interference can be expressed as $I_{cj}= \frac{\gamma_j P_t^m}{d_j^\beta}$, with its moments given by following proposition.

\begin{proposition}\label{prop:muc}
The first and second order moments of the interference from a uniformly distributed M-UE within the interfering circle are
\begin{equation}\label{rt:edIcj}
  \begin{split}
    \mu_{c} &= \mathbb{E}[I_{cj}]=\frac{2P_t^mP_\gamma (R_I^{\beta-2}-1)} {(\beta-2)R_I^{\beta-2}(R_I^2-1)} \\
    \mu_{c}^{(2)} &= \mathbb{E}[I_{cj}^2]=\frac{(P_t^m)^2P_\gamma^{(2)} (R_I^{2\beta-2}-1)} {(\beta-1)R_I^{2\beta-2}(R_I^2-1)}.
  \end{split}
\end{equation}
\end{proposition}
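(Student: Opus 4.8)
The plan is to separate the fading randomness from the geometric randomness and then reduce everything to two elementary power-law integrals.

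First I would establish the distribution of the distance $d_j$. Since M-UE$_j$ is uniformly distributed over the interfering region $C'$ subject to the near-field cutoff $d_j>1$ m, its location is uniform over the annulus $\{1\le d\le R_I\}$; passing to polar coordinates gives
\begin{equation}\nonumber
  f_{d}(r)=\frac{2r}{R_I^2-1},\qquad r\in[1,R_I],
\end{equation}
which is where the normalizer $R_I^2-1$ in (\ref{rt:edIcj}) originates. Then, using the independence of the small-scale fading gain $\gamma_j$ from the user position together with $\mathbb{E}_\gamma[\gamma_j]=P_\gamma$ and $\mathbb{E}_\gamma[\gamma_j^2]=P_\gamma^{(2)}$, I would write $I_{cj}=\gamma_j P_t^m d_j^{-\beta}$ and factor
\begin{equation}\nonumber
  \mathbb{E}[I_{cj}]=P_t^m P_\gamma\,\mathbb{E}_d[d_j^{-\beta}],\qquad
  \mathbb{E}[I_{cj}^2]=(P_t^m)^2 P_\gamma^{(2)}\,\mathbb{E}_d[d_j^{-2\beta}].
\end{equation}

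The remaining step is to evaluate $\mathbb{E}_d[d_j^{-\beta}]=\frac{2}{R_I^2-1}\int_1^{R_I}r^{1-\beta}\,dr$ and the analogous integral with $2\beta$ in place of $\beta$. Both are immediate: the first equals $\frac{R_I^{\beta-2}-1}{(\beta-2)R_I^{\beta-2}}$ (read as $\ln R_I$ in the borderline case $\beta=2$, by the obvious limit) and the second equals $\frac{R_I^{2\beta-2}-1}{2(\beta-1)R_I^{2\beta-2}}$. Substituting these back and collecting the constants reproduces the two expressions in (\ref{rt:edIcj}).

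There is no genuinely hard step here. The only points deserving care are the modeling justification that ``uniform within the interfering circle'' together with the convention $d_j>1$ amounts to a uniform distribution on the annulus (hence the density $f_d$ above), and the implicit assumption $\gamma_j$ is independent of $d_j$, which licenses the factorization; once these are in place the proof is a short calculus exercise, and I would only flag the $\beta=2$ case so that the closed forms are interpreted via their continuous limits.
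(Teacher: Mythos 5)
Your proposal is correct and follows essentially the same route as the paper: derive the density $f_d(x)=\frac{2x}{R_I^2-1}$ on $[1,R_I]$ from the uniform-location assumption, use the independence of $\gamma_j$ and $d_j$ to factor the moments, and evaluate the two power-law integrals (the paper merely asserts this last computation, which you carry out explicitly and correctly). Your flagging of the $\beta=2$ limiting case matches the remark the paper appends after the proposition.
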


\begin{proof}
    Since M-UE$_j$ is uniformly distributed within the interference circle, the probability that its distance to the H-eNB is less than $x$ is $\Pr\{d_j<x\}=\frac{x^2-1}{R_I^2-1}$. It is readily obtained that the {\em pdf} of $d_j$ is
   $f_d(x)=\frac{2x}{R_I^2-1},~ x\in[1,R_I]$.
        Next, the first and second order moment of the interference can be obtained readily by taking its average over $d_j$ and $\gamma_j$.
%
%
\end{proof}

\begin{remark} In the case of $\beta=2$, (\ref{rt:edIcj}) holds in the limitation sense. That is,
\begin{equation}\label{dr:limibeta2}\nonumber
  \begin{split}
    \mu_{c} &= \lim_{\beta\rightarrow2} \frac{2P_t^mP_\gamma (R_I^{\beta-2}-1)} {(\beta-2)R_I^{\beta-2}(R_I^2-1)}\\
      &= \frac{2P_t^mP_\gamma } {R_I^2-1}  \lim_{\beta\rightarrow2} \frac{R_I^{\beta-2}-1}{(\beta-2)}\frac{1}{R_I^{\beta-2}}
      = \frac{2P_t^mP_\gamma } {R_I^2-1} \ln R_I.
  \end{split}
\end{equation}
\end{remark}

\begin{remark} Actually, both $\mu_{c}$ and $\mu_{c}^{(2)}$ are decreasing with pathloss exponent $\beta$, which can be proved by checking their derivatives versus $\beta$.

\end{remark}

Since there are $\xi_{\textrm{iii}}$ M-UEs within the interfering circle, the total interference will be
\begin{equation}\label{eq:Ic}\nonumber
  I_{c}=\sum_{j=1}^{\xi_{\textrm{iii}}}I_{cj}=\sum_{j=1}^{\xi_{\textrm{iii}}} \frac{\gamma_j P_t^m}{d_j^\beta}.
\end{equation}

Define $G_{I_{cj}}(s)=\mathbb{E}[e^{sI_{cj}}]$
as the Moment generating function (MGF) of each individual interference. Then
the MGF of the total interference and its average and variance are summarized by the following theorem.

\begin{theorem}\label{th:closemgfED}
The MGF of the uplink interference to CSG femto cell UEs is
\begin{equation}\label{rt:upImgfc}
  G_{I_c}(s)=\left( \frac{P_i G_{I_{cj}}(s)}{P_i+P_o} + \frac{P_o}{P_i+P_o} \right)^N.
\end{equation}

Its average and variance are given by, respectively
\begin{equation}\label{rt:interclose}
   \begin{split}
       \mathbb{E}[I_c] & = \frac{NP_i}{P_i+P_o}\mu_c \\
       \mathbb{D}[I_c] & = \frac{NP_i}{P_i+P_o}\mu_c^{(2)}-\frac{NP_i^2}{(P_i+P_o)^2}\mu_c^2.
   \end{split}
\end{equation}
\end{theorem}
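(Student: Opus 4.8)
The plan is to recognize $I_c=\sum_{j=1}^{\xi_{\textrm{iii}}}I_{cj}$ as a compound (random) sum: the number of summands $\xi_{\textrm{iii}}$ is the number of M-UEs in the interfering circle $C'$, and the summands $I_{cj}=\gamma_j P_t^m/d_j^\beta$ are the individual interference contributions. Three facts drive the argument. First, by Theorem~\ref{th:1} applied to the circle $C'$, $\xi_{\textrm{iii}}$ has PGF $\xi_3(z)$ equal to the right-hand side of (\ref{rt:xiPGF}). Second, conditioned on $\xi_{\textrm{iii}}=m$, the $m$ interfering users are, by Lemma~\ref{lem:unif_distri}, independently and uniformly located in $C'$, and their fading gains $\gamma_j$ are i.i.d.\ and independent of the locations; hence the $I_{cj}$ are i.i.d.\ with common MGF $G_{I_{cj}}(s)$ and with first two moments $\mu_c,\mu_c^{(2)}$ as in Proposition~\ref{prop:muc}. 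Third, $\xi_{\textrm{iii}}$ is independent of the family $\{I_{cj}\}_{j\ge1}$.

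First I would condition on $\xi_{\textrm{iii}}$ and use independence together with the i.i.d.\ property to get $\mathbb{E}\!\left[e^{sI_c}\mid\xi_{\textrm{iii}}=m\right]=\prod_{j=1}^{m}\mathbb{E}\!\left[e^{sI_{cj}}\right]=\big(G_{I_{cj}}(s)\big)^m.$ Averaging over $\xi_{\textrm{iii}}$ then gives $G_{I_c}(s)=\mathbb{E}\!\left[\big(G_{I_{cj}}(s)\big)^{\xi_{\textrm{iii}}}\right]=\xi_3\!\big(G_{I_{cj}}(s)\big)$, that is, the PGF of $\xi_{\textrm{iii}}$ evaluated at $z=G_{I_{cj}}(s)$. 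Substituting $z=G_{I_{cj}}(s)$ into (\ref{rt:xiPGF}) yields (\ref{rt:upImgfc}) directly.

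For the mean and variance I would use the laws of total expectation and total variance for a random sum, $\mathbb{E}[I_c]=\mathbb{E}[\xi_{\textrm{iii}}]\,\mu_c$ and $\mathbb{D}[I_c]=\mathbb{E}[\xi_{\textrm{iii}}]\big(\mu_c^{(2)}-\mu_c^2\big)+\mathbb{D}[\xi_{\textrm{iii}}]\,\mu_c^2$, and then plug in $\mathbb{E}[\xi_{\textrm{iii}}]=NP_i/(P_i+P_o)$ and $\mathbb{D}[\xi_{\textrm{iii}}]=NP_iP_o/(P_i+P_o)^2$ from Proposition~\ref{prop:edxi}. Collecting the $\mu_c^2$ terms and using $P_iP_o-P_i(P_i+P_o)=-P_i^2$ reproduces exactly (\ref{rt:interclose}). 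Equivalently, one can differentiate (\ref{rt:upImgfc}) once and twice at $s=0$, using $G_{I_{cj}}(0)=1$, $G_{I_{cj}}'(0)=\mu_c$, $G_{I_{cj}}''(0)=\mu_c^{(2)}$, and the moment--MGF relations; this is a short calculation leading to the same expressions.

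The step I expect to require the most care is the second of the three facts above: conditioning on the number of users inside the interfering circle must not distort the spatial law of those users, otherwise the moments $\mu_c,\mu_c^{(2)}$ of Proposition~\ref{prop:muc} would not apply. This is where Lemma~\ref{lem:unif_distri} is essential --- because each of the $N$ users is uniformly distributed over the macro cell at every time slot (and they move independently), conditioning on the event that exactly $m$ of them lie in $C'$ leaves those $m$ users independently and uniformly distributed over $C'$, so that each $I_{cj}$ indeed has the distribution used in Proposition~\ref{prop:muc}, with the fading gains supplying the remaining independent randomness. The rest is routine conditioning and bookkeeping.
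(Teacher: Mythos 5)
Your proposal is correct and follows essentially the same route as the paper: both identify $I_c$ as a compound sum, condition on $\xi_{\textrm{iii}}$ to obtain $G_{I_c}(s)=\xi\bigl(G_{I_{cj}}(s)\bigr)$, and then extract the mean and variance (the paper differentiates the MGF at $s=0$, which you note as the equivalent alternative to your total-expectation/total-variance computation; the two give identical algebra). Your explicit remark that Lemma~\ref{lem:unif_distri} is what justifies treating the conditioned interferers as i.i.d.\ uniform on $C'$ is a point the paper leaves implicit, but it is not a different argument.
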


\begin{proof}
    By its definition, one has
    \begin{equation}\label{dr:mgfxi3}\nonumber
    \begin{split}
      G_{I_c}(s)&=\mathbb{E}[e^{sI_c}]=\mathbb{E}[e^{s\sum_{j=1}^{\xi_{\textrm{iii}}} I_{cj}}]\\
                &=\sum_{k=0}^N \Pr\{\xi_{\textrm{iii}}=k\}\left(\mathbb{E}[e^{sI_{cj}}]\right)^k
                =\xi\left(G_{I_{cj}}(s)\right)
    \end{split}
    \end{equation}
    where $\xi(z)$ was given by (\ref{rt:xiPGF}). This proves (\ref{rt:upImgfc}).

    Then the average uplink interference will be
    \begin{equation}\label{dr:averageIc}\nonumber
            \mathbb{E}[I_c]=G_{I_c}'(s)|_{s=0}
            =\frac{NP_i}{P_i+P_o}\mu_c.
    \end{equation}

    Similarly, its second moment is
    \begin{equation}\label{dr:averageIc2}\nonumber
      \begin{split}
        &\mathbb{E}[I_c^2]=G_{I_c}''(s)|_{s=0}
            =\frac{N(N-1)P_i^2}{(P_i+P_o)^2}\mu_c^2+\frac{NP_i}{P_i+P_o}\mu_c^{(2)}
      \end{split}
    \end{equation}
    where $\mu_c$ and $\mu_c^{(2)}$ are given by Proposition \ref{prop:muc}.

    Therefore, the variance of uplink interference is
    \begin{equation}\label{dr:averageIc2}\nonumber
      \begin{split}
      \mathbb{D}[I_c]&=\mathbb{E}[I_c^2]-\mathbb{E}^2[I_c]
                =\frac{NP_i}{P_i+P_o}\mu_c^{(2)}-\frac{NP_i^2}{(P_i+P_o)^2}\mu_c^2.
                \end{split}
    \end{equation}

\end{proof}

\begin{remark}
    It is known that  the  pdf of a random variable is completely determined by its MGF \cite{Fan-SCT}. Thus Theorem \ref{th:closemgfED} gives a full characterization of the uplink interference to a CSG femto cell.
       Explicit expressions for $G_{I_{cj}}(s)$ can also be obtained for any given $f_\gamma(x)$.
\end{remark}

\begin{remark}
    Two key parameters for the results in Theorem \ref{th:closemgfED} are $R_I$ and $\Delta$. While $R_I$ specifies the interfering area, $\Delta$ indicates the mobility of users. Therefore, this theorem has presented how user mobility affects the uplink interference.
\end{remark}

\subsection{Uplink Interference to OSG Femto cells}
OSG femto or pico/relay cells will admit every users coming into their coverage. Therefore, only M-UEs outside the cell but within the interfering radius will cause interference.

Assume there are $\xi_{\textrm{iii}}$ users in all within the circular area $C'$ of radius $R_I$, in which $\xi_{\textrm{i}}$ users locates within $C_k$. Thus the number of interferers in the interfering ring is $\xi_{\textrm{ii}}=\xi_{\textrm{iii}}-\xi_{\textrm{i}}$.

Let $d_j\in (R,R_I)$ be the distance between an interferer and the H-eNB.
   Its interference to H-UEs is $I_{oj}=\frac{\gamma_jP_t^m}{d_j^\beta}$ and the total interference is
\begin{equation}\label{eq:Io}\nonumber
  I_{o}=\sum_{j=1}^{\xi_{\textrm{ii}}}I_{oj}=\sum_{j=1}^{\xi_{\textrm{ii}}} \frac{\gamma_j P_t^m}{d_j^\beta}.
\end{equation}

First, the first and second moments of $I_{oj}$ are given by the following proposition.

\begin{proposition}\label{prop:nuo}
The first and second order moments of the interference from a uniformly located M-UE within the interfering ring are
\begin{equation}\label{rt:edIoj}
  \begin{split}
    \nu_{o} &= \mathbb{E}[I_{oj}]=\frac{2P_t^mP_\gamma (R_I^{\beta-2}-R^{\beta-2})} {(\beta-2)R_I^{\beta-2}R^{\beta-2}(R_I^2-R^2)} \\
    \nu_{o}^{(2)} &= \mathbb{E}[I_{oj}^2]=\frac{(P_t^m)^2P_\gamma^{(2)} (R_I^{2\beta-2}-R^{2\beta-2})} {(\beta-1)R_I^{2\beta-2}R^{2\beta-2}(R_I^2-R^2)}.
  \end{split}
\end{equation}
\end{proposition}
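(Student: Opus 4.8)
The plan is to mirror exactly the computation already used in Proposition \ref{prop:muc}, since the only difference between the CSG and OSG cases is that the interfering M-UE is now uniformly distributed over the annulus $R<d_j<R_I$ rather than over the disk $1<d_j<R_I$. First I would establish the distribution of the distance $d_j$: because the interferer is uniformly distributed over the interfering ring (an annulus of inner radius $R$ and outer radius $R_I$), the probability that it lies within distance $x$ of the H-eNB is the ratio of areas, $\Pr\{d_j<x\}=\frac{x^2-R^2}{R_I^2-R^2}$ for $x\in[R,R_I]$, and differentiating gives the \emph{pdf} $f_d(x)=\frac{2x}{R_I^2-R^2}$ on $[R,R_I]$. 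This is the lone structural input that changes relative to the CSG derivation.

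Next I would compute the two moments by averaging $I_{oj}=\gamma_j P_t^m / d_j^\beta$ over $d_j$ and the independent fading gain $\gamma_j$. For the first moment, $\mathbb{E}[I_{oj}] = P_t^m\,\mathbb{E}_\gamma[\gamma_j]\,\mathbb{E}_d[d_j^{-\beta}] = P_t^m P_\gamma \int_R^{R_I} x^{-\beta}\,\frac{2x}{R_I^2-R^2}\,dx$; the integral is $\frac{2}{R_I^2-R^2}\int_R^{R_I} x^{1-\beta}\,dx = \frac{2}{R_I^2-R^2}\cdot\frac{R_I^{2-\beta}-R^{2-\beta}}{2-\beta}$, which after rearranging the sign and writing $R_I^{2-\beta}=1/R_I^{\beta-2}$ yields precisely the claimed $\nu_o$. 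For the second moment, by independence $\mathbb{E}[I_{oj}^2]=(P_t^m)^2\,\mathbb{E}_\gamma[\gamma_j^2]\,\mathbb{E}_d[d_j^{-2\beta}] = (P_t^m)^2 P_\gamma^{(2)} \cdot \frac{2}{R_I^2-R^2}\int_R^{R_I} x^{1-2\beta}\,dx$, and the integral evaluates to $\frac{R_I^{2-2\beta}-R^{2-2\beta}}{2-2\beta}$, giving the stated $\nu_o^{(2)}$ after the same algebraic cleanup. I would also note, as in the remark following Proposition \ref{prop:muc}, that the $\beta=2$ case is understood in the limiting sense, where $\frac{R_I^{\beta-2}-R^{\beta-2}}{(\beta-2)R_I^{\beta-2}R^{\beta-2}}\to \frac{1}{R^{0}}\ln(R_I/R)$-type expression.

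There is essentially no obstacle here: the argument is a one-line change of the integration limits from $[1,R_I]$ to $[R,R_I]$ in an elementary power-law integral, followed by bookkeeping to match the paper's chosen normalization of the fractions. The only point requiring a little care is the algebraic simplification: one must verify that $\frac{R_I^{2-\beta}-R^{2-\beta}}{2-\beta}$ equals $\frac{R_I^{\beta-2}-R^{\beta-2}}{(\beta-2)R_I^{\beta-2}R^{\beta-2}}$ after multiplying numerator and denominator by $R_I^{\beta-2}R^{\beta-2}$ and flipping the sign of both numerator and denominator, and similarly for the second-moment exponents with $\beta$ replaced by $2\beta-1$ in the relevant places. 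Once that identity is checked, the two displayed formulas in \eqref{rt:edIoj} follow immediately, and the proof is complete.
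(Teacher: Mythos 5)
Your proof is correct and is exactly the argument the paper intends: the paper omits this proof, stating only that it is ``similar to that of Proposition \ref{prop:muc},'' and your derivation is precisely that template with the sole change being the distance \emph{pdf} $f_d(x)=\frac{2x}{R_I^2-R^2}$ on $[R,R_I]$ in place of $\frac{2x}{R_I^2-1}$ on $[1,R_I]$. (Your parenthetical remark that the second moment follows by ``replacing $\beta$ with $2\beta-1$'' is not quite the right substitution, but this is harmless since you evaluate $\int_R^{R_I}x^{1-2\beta}\,dx$ explicitly and correctly, and the factor $2$ cancels against $2(\beta-1)$ to give the stated $\nu_o^{(2)}$.)
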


The proof of Proposition \ref{prop:nuo} is similar to that of Proposition \ref{prop:muc} and is omitted here. It can be proved that $\nu_{o}$ and $\nu_{o}^{(2)}$ are also decreasing with $\beta$.

Define $\varphi(x)=\frac{R_I^{\beta-2}-x^{\beta-2}}{x^{\beta-2}}=\left(\frac{R_I}{x}\right)^{\beta-2}-1$ for $x\geq1$, it is clear that $\varphi(x)$ is decreasing with $x$ and $\varphi(1)>\varphi(R)$. By comparing (\ref{rt:edIcj}) and (\ref{rt:edIoj}), we have $\mu_c>\nu_o$ and $\mu_c^{(2)}>\nu_o^{(2)}$.

Define $G_{I_{oj}}(s)=\mathbb{E}[e^{sI_{cj}}]$
as the MGF of the instant interference from M-UE$_j$,
the MGF of $I_o$ and its average and variance are given by the following theorem.

\begin{theorem}\label{th:openmgfED}
The MGF of the uplink interference to OSG femto cell UEs is
\begin{equation}\label{rt:upImgfo}
  G_{I_o}(s)=\left( \frac{P_i (qG_{I_{oj}}(s)+1-q)}{P_i+P_o} + \frac{P_o}{P_i+P_o} \right)^N.
\end{equation}

The average and variance are given by, respectively
\begin{equation}\label{rt:interopen}
   \begin{split}
       \mathbb{E}[I_o] & = \frac{NP_iq}{P_i+P_o}\nu_o \\
       \mathbb{D}[I_o] & = \frac{NP_iq}{P_i+P_o}\nu_c^{(2)}-\frac{NP_i^2q^2}{(P_i+P_o)^2}\nu_c^2
   \end{split}
\end{equation}
where $q=1-\frac{R^2}{R_I^2}$, $P_i$ and $P_o$ are calculated with $R_I$ and $\Delta$.
\end{theorem}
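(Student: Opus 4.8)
The plan is to compute $G_{I_o}(s)=\mathbb{E}[e^{sI_o}]$ by a two‑stage conditioning: first on $\xi_{\textrm{iii}}$, the number of M‑UEs inside the interfering circle $C'$ of radius $R_I$, and then by ``thinning'' that population down to the interfering ring $R<d_j<R_I$. By Theorem~\ref{th:1} applied with cell radius $R_I$, $\xi_{\textrm{iii}}$ has PGF $\xi_3(z)=\bigl(\tfrac{P_iz}{P_i+P_o}+\tfrac{P_o}{P_i+P_o}\bigr)^N$, where $P_i,P_o$ are the incoming/outgoing probabilities evaluated with $R_I$ and $\Delta$. The structural observation is that, given $\xi_{\textrm{iii}}=k$, each of the $k$ users contributes to $I_o$ only if it lies in the ring, in which case its contribution is $I_{oj}$ with MGF $G_{I_{oj}}(s)$, while a user inside $C_k$ contributes $0$, i.e. has ``MGF'' equal to $1$.

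To make the thinning rigorous I would invoke Lemma~\ref{lem:unif_distri} together with the assumed independence of distinct users' L\'evy flights (and of the small‑scale fadings $\gamma_j$): conditioned on $\xi_{\textrm{iii}}=k$, the $k$ users occupying $C'$ have conditionally i.i.d.\ uniform positions in $C'$, so each independently falls in the interfering ring with probability $q=\tfrac{\pi R_I^2-\pi R^2}{\pi R_I^2}=1-\tfrac{R^2}{R_I^2}$ and, given that it does, is uniform over the ring -- precisely the distribution behind Proposition~\ref{prop:nuo}. Hence $\mathbb{E}\bigl[e^{sI_o}\mid \xi_{\textrm{iii}}=k\bigr]=\bigl(qG_{I_{oj}}(s)+1-q\bigr)^k$, and averaging over $k$ gives $G_{I_o}(s)=\xi_3\bigl(qG_{I_{oj}}(s)+1-q\bigr)$. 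Substituting the explicit form of $\xi_3$ yields (\ref{rt:upImgfo}).

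The first two moments then follow by differentiating $G_{I_o}(s)$ at $s=0$, exactly as in the proof of Theorem~\ref{th:closemgfED} but with $G_{I_{cj}}$ replaced by $h(s):=qG_{I_{oj}}(s)+1-q$. Proposition~\ref{prop:nuo} gives $G_{I_{oj}}(0)=1$, $G_{I_{oj}}'(0)=\nu_o$, $G_{I_{oj}}''(0)=\nu_o^{(2)}$, whence $h(0)=1$, $h'(0)=q\nu_o$, $h''(0)=q\nu_o^{(2)}$ -- note the second derivative picks up $q$, not $q^2$, because the ring‑indicator equals its own square. Writing $G_{I_o}=g^N$ with $g=(P_ih+P_o)/(P_i+P_o)$, one has $g(0)=1$, $g'(0)=\tfrac{P_iq\nu_o}{P_i+P_o}$, $g''(0)=\tfrac{P_iq\nu_o^{(2)}}{P_i+P_o}$; then $\mathbb{E}[I_o]=G_{I_o}'(0)=Ng'(0)$ and $\mathbb{E}[I_o^2]=G_{I_o}''(0)=N(N-1)g'(0)^2+Ng''(0)$, and $\mathbb{D}[I_o]=\mathbb{E}[I_o^2]-\mathbb{E}[I_o]^2$ collapses to the stated $\tfrac{NP_iq}{P_i+P_o}\nu_o^{(2)}-\tfrac{NP_i^2q^2}{(P_i+P_o)^2}\nu_o^2$.

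The one genuinely delicate step is the conditional‑i.i.d.-uniform claim used in the thinning: it is what allows the two‑stage randomness (how many users in $C'$, then how many of those in the ring) to factor into the clean composition $G_{I_o}=\xi_3\circ h$. Given Lemma~\ref{lem:unif_distri} and the independence of users' moves this is routine; everything afterwards is bookkeeping. One caveat worth stating explicitly in the write‑up is to keep the two radii straight -- $P_i,P_o$ (and hence $\xi_3$) are evaluated with $R_I$, whereas $q$, $\nu_o$ and $\nu_o^{(2)}$ depend on both $R$ and $R_I$ -- which is exactly the qualification ``$P_i$ and $P_o$ are calculated with $R_I$ and $\Delta$'' in the statement.
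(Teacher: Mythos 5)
Your proposal is correct and follows essentially the same route as the paper: binomial thinning of the $C'$ population with probability $q=1-R^2/R_I^2$ (justified by Lemma~\ref{lem:unif_distri}), composition of the resulting generating functions to get $G_{I_o}=\xi\bigl(qG_{I_{oj}}(s)+1-q\bigr)$, and differentiation at $s=0$ for the moments. The only cosmetic difference is that the paper first derives the PGF of $\xi_{\textrm{ii}}$ as $\xi(qz+1-q)$ and then composes with $G_{I_{oj}}$, whereas you fold the thinning directly into the per-user MGF; your derivation also confirms that the $\nu_c$'s in the stated variance should read $\nu_o$.
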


\begin{proof}
    For any user who has moved into the interfering circle $C'$, its location is uniformly distributed in the area by Lemma \ref{lem:unif_distri}. Thus its probability of lying in the interfere ring is $q=\frac{(\pi R_I^2-\pi R^2)}{\pi R_I^2}=1-\frac{R^2}{R_I^2}$. Then the probability that there are $k$ users in the interfering ring is
    \begin{equation}\label{dr:pxi2k}\nonumber
    \begin{split}
      \Pr\{\xi_{\textrm{ii}}=k\}  
         &=\sum_{i=0}^{N-k}\Pr\{\xi_{\textrm{i}}=i,\xi_{\textrm{iii}}=k+i\} \\
         &=\sum_{i=0}^{N-k}\Pr\{\xi_{\textrm{iii}}=k+i\}C_{k+i}^kq^k(1-q)^i.
    \end{split}
    \end{equation}

    Next, the MGF of $\xi_{\textrm{ii}}$ is
    \begin{equation}\label{dr:xi2mgf}\nonumber
    \begin{split}
      G_{\xi_{\textrm{ii}}}(z)&=\sum_{k=0}^N z^k \Pr\{\xi_{\textrm{ii}}=k\}\\
        &=\sum_{k=0}^N z^k \sum_{i=0}^{N-k}\Pr\{\xi_{\textrm{iii}}=k+i\}C_{k+i}^kq^k(1-q)^i\\
        &\stackrel{(a)}{=}\sum_{k=0}^N z^k \sum_{j=k}^{N}\Pr\{\xi_{\textrm{iii}}=j\}C_{j}^k q^k(1-q)^{j-k}\\
        &=\xi(qz+1-q)
    \end{split}
    \end{equation}
    where variable substitution $j=k+i$ is used in (a).

    Then the MGF of total interference $I_o$ will be
    \begin{equation}\label{dr:mgfIooo}\nonumber
      \begin{split}
            G_{I_o}(s)&=\mathbb{E}[e^{sI_o}]=\mathbb{E}[e^{s\sum_{j=1}^{\xi_{\textrm{ii}}}I_{oj}}]\\
                    &=\sum_{k=1}^N \Pr\{\xi_{\textrm{ii}}=k\}\left(\mathbb{E}[e^{sI_{oj}}]\right)^k
                    =G_{\xi_{\textrm{ii}}}(G_{I_{oj}}(s))
      \end{split}
    \end{equation}
    which proves (\ref{rt:upImgfo}).

    The average interference is
    \begin{equation}\label{dr:averageIo}\nonumber
    \begin{split}
      \mathbb{E}[I_o]&=G_{I_o}'(s)|_{s=0}=G_{\xi_{\textrm{ii}}}'(G_{I_{oj}}(s))|_{s=0}
                =\frac{N P_i q}{P_i+P_o}\nu_o.
    \end{split}
    \end{equation}

    The second moment of $I_o$ can be obtained by
    \begin{equation}\label{dr:averageIo2}\nonumber
    \begin{split}
      \mathbb{E}[I_o^2]=&G_{I_o}''(s)|_{s=0} \\ 
                =&\frac{N(N-1)P_i^2 q^2}{(P_i+P_o)^2}\nu_o^2 + \frac{N P_i q}{P_i+P_o}\nu_o^{(2)}.
    \end{split}
    \end{equation}

    Finally, (\ref{rt:interopen}) will be proved by using $\mathbb{D}[I_o]=\mathbb{E}[I_o^2]-\mathbb{E}^2[I_o]$.
\end{proof}

\begin{remark}
   Recall that $\nu_o<\mu_c$ is true. Thus we have $\mathbb{E}[I_o]<\mathbb{E}[I_c]$, which  means that an OSG femto cell will suffer less cross-tier uplink interference than a CSG femto cell on average.
\end{remark}

\section{Success Probability and Average Rate}\label{sec:6}

As an application of the developed interference model, this section will characterize the system performance as a function of the random signal-to-interference-plus-noise ratio
(SINR), which is given by
\begin{equation}\label{df:sinr}
  \rho(\kappa)=\frac{P_r^f(\kappa)}{I+P_n}.
\end{equation}

 In the above equation, $P_n$ is the noise power, $I$ is the uplink interference and $I=I_c$ for CSG femto cells and $I=I_o$ for OSG femto cells. $P_r^f(\kappa)$ is the received power at the H-eNB from a H-UE which is at a distance of $d=\kappa R$ away, $\kappa\in(\frac1R,1)$.
    Let $P_t^h$ be H-UEs' transmit power, we have $P_r^f(\kappa)=\frac{\gamma P_t^h}{\kappa^\beta R^\beta}$.

It is assumed that $\gamma$ follows the negative exponential distribution, namely the Rayleigh fading model with
$f_\gamma(x)=\frac{1}{P_\gamma} e^{-\frac{x}{P_\gamma}}$.

SINR is a useful quantification for performance analysis in cellular systems since system performance is
usually interference limited, especially for users at the cell edge.
   In our formulation, it is assumed that no pre-coding or multi-user detection are used at the H-eNB. Thus signals from unexpected users contributes to interference only.

Two metrics of performance are evaluated: the success probability defined as $\Pr\{\rho(\kappa)\geq T\}$ where $T$ is
a given threshold,  and the average achievable rate, which is given by
\begin{equation}\label{df:averageC}
  C(\kappa)=W\mathbb{E}\ln(1+\rho(\kappa)),
\end{equation}
where $W$ is the system bandwidth.

While success probability measures the impact of interference and channel fading on transmission reliability, the achievable rate indicates the cell's transmission efficiency.
   Besides, the outage behavior $P_{outage}=\Pr\{\rho(\kappa)\leq T\}$ can also be obtained along the same line.

\subsection{Success Probability}
The success probability of both CSG and OSG femto cell UEs can be summarized in the following proposition.

\begin{proposition}\label{prop:successProba}
The success probability of a femto user  is given by
\begin{equation}\label{prop:sucPro}\nonumber
\Pr\{\rho(\kappa)\geq T\}=\exp\left(\frac{-\kappa^\beta R^\beta P_n T}{P_t^h P_\gamma}\right) G_I\left(\frac{-\kappa^\beta R^\beta T }{P_t^h P_\gamma} \right)
\end{equation}
   where $G_I(s)=G_{I_c}(s)$ is given by (\ref{rt:upImgfc}), for CSG femto cells and $G_I(s)=G_{I_o}(s)$ is given by (\ref{rt:upImgfo}), for OSG femto cells.
\end{proposition}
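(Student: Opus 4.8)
The plan is to exploit the Rayleigh (exponentially distributed) fading on the desired link, which collapses the success event into a simple exponential tail probability once the interference is conditioned upon. First I would rewrite the event in terms of $\gamma$. Since $P_r^f(\kappa)=\frac{\gamma P_t^h}{\kappa^\beta R^\beta}$ and $\rho(\kappa)=\frac{P_r^f(\kappa)}{I+P_n}$ with $I=I_c$ for CSG cells and $I=I_o$ for OSG cells, the event $\{\rho(\kappa)\geq T\}$ is equivalent to
\begin{equation}\label{eq:plan_event}\nonumber
  \left\{\gamma \geq \frac{\kappa^\beta R^\beta T\,(I+P_n)}{P_t^h}\right\},
\end{equation}
and the right-hand side is nonnegative.

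Next I would condition on $I$. The fading gain $\gamma$ of the tagged H-UE's link is independent of $I$, since $I$ aggregates only the interferers' positions, their number $\xi_{\textrm{iii}}$ (or $\xi_{\textrm{ii}}$), and the interferers' own fading gains. Using $f_\gamma(x)=\frac{1}{P_\gamma}e^{-x/P_\gamma}$ we have $\Pr\{\gamma\geq a\}=e^{-a/P_\gamma}$ for $a\geq 0$, hence
\begin{equation}\label{eq:plan_cond}\nonumber
  \Pr\{\rho(\kappa)\geq T \mid I\} = \exp\!\left(-\frac{\kappa^\beta R^\beta T\,(I+P_n)}{P_t^h P_\gamma}\right).
\end{equation}
Taking the expectation over $I$, the deterministic factor $\exp\!\left(-\frac{\kappa^\beta R^\beta P_n T}{P_t^h P_\gamma}\right)$ pulls out, and what remains is $\mathbb{E}_I\big[\exp(sI)\big]$ with $s=-\frac{\kappa^\beta R^\beta T}{P_t^h P_\gamma}$, which is exactly $G_I(s)$ by definition. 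Substituting $G_{I_c}(s)$ from (\ref{rt:upImgfc}) or $G_{I_o}(s)$ from (\ref{rt:upImgfo}) yields the two cases, and the proof is finished.

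Frankly, this is a routine computation rather than a deep argument; the only points that deserve a sentence of care are (i) the independence of the desired-link fading $\gamma$ from the interference $I$, which legitimizes the conditioning step and the factorization of the expectation, and (ii) the fact that $G_I$ is being evaluated at a negative real argument — this causes no difficulty because $I\geq 0$, so $\mathbb{E}[e^{sI}]\leq 1<\infty$ for every $s\leq 0$, i.e.\ the moment generating function (equivalently, the Laplace transform of $I$) is automatically finite there. I would also remark that the statement is uniform across the CSG and OSG cases: the derivation is identical, and only the explicit form of $G_I$ differs through Theorems \ref{th:closemgfED} and \ref{th:openmgfED}.
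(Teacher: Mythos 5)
Your proposal is correct and follows essentially the same route as the paper's own proof: condition on the interference $I$, use the exponential tail of the Rayleigh power gain to get $\exp\bigl(-\kappa^\beta R^\beta T (I+P_n)/(P_t^h P_\gamma)\bigr)$, and recognize the remaining expectation over $I$ as the MGF $G_I$ evaluated at $s=-\kappa^\beta R^\beta T/(P_t^h P_\gamma)$. Your added remarks on the independence of the desired-link fading from $I$ and on the finiteness of $G_I$ at negative arguments are correct and in fact slightly more careful than the paper's derivation.
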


\begin{proof}
    Whether a user can access to the H-eNB successfully or not depends both on the instant channel gain and the instant uplink interference. Thus the success probability will be
    \begin{equation}\label{dr:sucprobbs}\nonumber
    \begin{split}
      \Pr \left\{ \rho(\kappa)\geq T  \right\} 
          =&\int_0^\infty f_I(x)dx \int_{\frac{\kappa^\beta R^\beta}{P_t^h}(I+P_n) T}^\infty f_\gamma(y)dy \\
          =&\exp\left(\frac{-\kappa^\beta R^\beta P_n T}{P_t^hP_\gamma}\right) G_I \left(\frac{-\kappa^\beta R^\beta  T}{P_t^hP_\gamma}\right)
    \end{split}
    \end{equation}
    where $G_I(s)=G_{I_c}(s)$ or $G_{I_o}(s)$ is the MGF of the interference from M-UEs to CSG or OSG femto cells, given by (\ref{rt:upImgfc}) and (\ref{rt:upImgfo}), respectively.
\end{proof}

In both Theorem \ref{th:closemgfED} and Theorem \ref{th:openmgfED}, the MGFs of uplink interference are presented in terms of the MGF of the interference from a certain macro interferer $j$, i.e., $G_{I_{cj}}(s)$ and $G_{I_{oj}}(s)$, respectively.
   For Rayleigh fading and commonly used pathloss exponents, one has
\begin{equation}\label{dr:gicjmgf}\nonumber
\begin{split}
  G_{I_{cj}}(s)&=\mathbb{E}[e^{sI_{cj}}]
     =1+ \frac{2P_t^mP_\gamma s}{R_I^2-1} \int_1^{R_I} \frac{y}{y^\beta-P_t^mP_\gamma s} dy\\
  G_{I_{oj}}(s)&=\mathbb{E}[e^{sI_{oj}}]
     =1+ \frac{2P_t^mP_\gamma s}{R_I^2-R^2} \int_R^{R_I} \frac{y}{y^\beta-P_t^mP_\gamma s} dy.
\end{split}
\end{equation}

Closed form expressions can be obtained for some special cases such as $\beta=2$, $\beta=4$.
\begin{equation}\nonumber
  \begin{split}
   G_{I_{oj}}(s) &= 1+ \frac{P_t^mP_\gamma s}{R_I^2-R^2} \ln\frac{R_I^2-P_t^mP_\gamma s}{R^2-P_t^mP_\gamma s}, \quad \beta=2;\\
   G_{I_{oj}}(s)  &= 1+ \frac{\sqrt{P_t^mP_\gamma s}}{2(R_I^2-R^2)} \\
            & \quad \cdot \ln \frac{(R_I^2-\sqrt{P_t^mP_\gamma s})(R^2+\sqrt{P_t^mP_\gamma s})} {(R_I^2+\sqrt{P_t^mP_\gamma s})(R^2-\sqrt{P_t^mP_\gamma s})}, \quad\beta=4,\\
   G_{I_{cj}}(s) &= 1+ \frac{P_t^mP_\gamma s}{R_I^2-1} \ln\frac{R_I^2-P_t^mP_\gamma s}{1-P_t^mP_\gamma s},\quad \beta=2;\\
    G_{I_{cj}}(s) & = 1+ \frac{\sqrt{P_t^mP_\gamma s}}{2(R_I^2-1)} \\
            & \quad \cdot \ln \frac{(R_I^2-\sqrt{P_t^mP_\gamma s})(1+\sqrt{P_t^mP_\gamma s})} {(R_I^2+\sqrt{P_t^mP_\gamma s})(1-\sqrt{P_t^mP_\gamma s})}, \quad \beta=4.
  \end{split}
\end{equation}

There are three key parameters involved in Proposition \ref{prop:successProba}, namely $\Delta$, $R_I$ and $\kappa$.

First, $\Delta$ is an indicator of users' mobility and is the biggest difference between our interference model from others.
   Second, $R_I$ is the interfering radius which determines how many M-UEs will cause interference.
Specifically, the influence of $\Delta$ and $R_I$ are implied through $P_o$ and $P_i$ in Proposition \ref{prop:po}, \ref{prop:pi} and in Theorem \ref{th:closemgfED}, \ref{th:openmgfED}.
   Finally, $\kappa=\frac{d}{R}$ specifies the distance between the user and the H-eNB, and will dominate the distant-dependent success probability function $\Pr \left\{ \rho(\kappa)\geq T  \right\}$.

\subsection{Average rate}
Average rate is another important evaluation of UEs' performance, especially for those edge UEs.  The average rate of a user at position $\kappa$ is obtained by averaging over the random interference and the fading channel gain, and is
summarized in the following proposition.

\begin{proposition}
    The average rate of a femto user at position $\kappa=\frac{d}{R}$ is
    \begin{equation}\label{rt:prop7}
        \begin{split}
            \mathbb{E}[C(\kappa)]=W\int_0^\infty \frac{P_t^h P_\gamma e^{-P_nx}}{\kappa^\beta R^\beta+P_t^h P_\gamma x}  G_I(-x)dx
        \end{split}
    \end{equation}
    where $G_I(s)=G_{I_c}(s)$  for CSG femto cells and $G_I(s)=G_{I_o}(s)$  for OSG femto cells.
\end{proposition}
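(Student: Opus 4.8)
The plan is to express the expected rate as an integral of a complementary CDF and then to recycle the success-probability computation of Proposition \ref{prop:successProba}. Concretely, note that $\mathbb{E}[C(\kappa)] = W\,\mathbb{E}[\ln(1+\rho(\kappa))]$, and that the success probability appearing in Proposition \ref{prop:successProba} is exactly $\Pr\{\rho(\kappa) > t\}$ with the fixed threshold $T$ replaced by a running variable $t$; so the derivation below amounts to little more than integrating that proposition against the weight $1/(1+t)$.

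First I would use the elementary identity $\ln(1+y) = \int_0^\infty \mathbf 1\{t<y\}/(1+t)\,dt$, valid for every $y\ge 0$. Substituting $y=\rho(\kappa)$, taking the expectation over the fading gain $\gamma$ and the interference $I$, and interchanging expectation and integration (justified by Tonelli's theorem, since the integrand is nonnegative) gives
$$\mathbb{E}[\ln(1+\rho(\kappa))] = \int_0^\infty \frac{\Pr\{\rho(\kappa) > t\}}{1+t}\,dt.$$

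Next I would evaluate $\Pr\{\rho(\kappa) > t\}$ exactly as in the proof of Proposition \ref{prop:successProba}. Writing $P_r^f(\kappa) = \gamma P_t^h/(\kappa^\beta R^\beta)$ and conditioning on $I$, the event $\rho(\kappa) > t$ is the event $\gamma > \kappa^\beta R^\beta (I+P_n)t/P_t^h$; since $\gamma$ is exponential with mean $P_\gamma$, its conditional probability equals $\exp(-\kappa^\beta R^\beta (I+P_n)t/(P_t^h P_\gamma))$. Averaging the factor $\exp(-\kappa^\beta R^\beta t I/(P_t^h P_\gamma))$ over $I$ and recognizing the MGF $G_I(s)=\mathbb{E}[e^{sI}]$ (with $G_I=G_{I_c}$ for CSG cells from Theorem \ref{th:closemgfED} and $G_I=G_{I_o}$ for OSG cells from Theorem \ref{th:openmgfED}) yields
$$\Pr\{\rho(\kappa) > t\} = \exp\!\left(\frac{-\kappa^\beta R^\beta P_n t}{P_t^h P_\gamma}\right) G_I\!\left(\frac{-\kappa^\beta R^\beta t}{P_t^h P_\gamma}\right).$$

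Finally I would substitute this into the integral and perform the change of variable $x = \kappa^\beta R^\beta t/(P_t^h P_\gamma)$, under which $dt/(1+t) = P_t^h P_\gamma\,dx/(\kappa^\beta R^\beta + P_t^h P_\gamma x)$ and the leading exponential becomes $e^{-P_n x}$; multiplying through by $W$ gives (\ref{rt:prop7}). No step is really hard here: the only point needing a word of justification is the interchange of expectation and integration, which Tonelli handles because everything is nonnegative, and the rest is a routine substitution that reuses the interference MGF already in hand.
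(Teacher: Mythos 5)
Your proof is correct and follows essentially the same route as the paper: both reduce the expected rate to an integral of the tail probability $\Pr\{\rho(\kappa)>t\}$ computed in Proposition \ref{prop:successProba} (the paper does this by first writing the CDF of $C(\kappa)$ and integrating $1-F_C(x)$, which is your layer-cake identity after the substitution $x=W\ln(1+t)$), and then the same change of variable $x=\kappa^\beta R^\beta t/(P_t^h P_\gamma)$ yields (\ref{rt:prop7}). Your version is marginally cleaner in that it reuses the success-probability formula directly rather than re-deriving the CDF of the rate.
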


\begin{proof}
    The randomness of $C(\kappa)$ comes from the fading of the channel as well as the randomness of uplink interference. Its CDF is given by
    \begin{equation}\label{dr:cdfckappa}\nonumber
    \begin{split}
      F_C(x)&=\Pr\{C(\kappa)\leq x\}   \\
         &=\int_0^\infty f_I(z)dz \int_0^{\frac{\kappa^\beta R^\beta}{P_t^h}(e^{\frac xW}-1)(z+P_n)} f_\gamma(y) dy\\
         &=1-\exp\left(\frac{-\kappa^\beta R^\beta P_n }{P_t^h P_\gamma}(e^\frac{x}{W}-1)\right)   \\
         & \qquad~~\cdot  G_I\left( \frac{-\kappa^\beta R^\beta }{P_t^h P_\gamma}(e^\frac{x}{W}-1)\right).
    \end{split}
    \end{equation}

    Then the average rate will be
    \begin{equation}\label{dr:averagerate}\nonumber
      \begin{split}
        \mathbb{E}[C(\kappa)]&=\int_0^\infty x dF_C(x)\\
           &=W\int_0^\infty \frac{P_t^h P_\gamma e^{-P_n x}}{\kappa^\beta R^\beta+P_t^h P_\gamma x} G_I(-x)dx.
      \end{split}
    \end{equation}
\end{proof}

Note that this can be readily calculated numerically.

\section{Simulation Results}\label{sec:7}
In a heterogeneous network as shown in Fig. \ref{fig:net model}, we consider one of the macro cells in the network.
   The radius of the macro cell is $L=500$ m.
There are $N=10000$ users in the macro cell. Therefore, we have one user for every $78.5$ m$^2$.
   Femto/pico/relay cells are also placed in the network, which are covered by low power H-eNBs.
   Although the both M-UEs and H-UEs have the same maximum transmit power, it is reasonable to assume that after power control, the actual transmit power of M-UEs is $P_t^m=20$ dBm and the transmit power of those H-UEs is $P_t^h=-3$ dBm.
Assume that the radius of a femto cell of interest is $R=60$ m, and the interfering radius is $R_I=120$ m.
Let $W=5$ MHz be the system bandwidth and $N_0=3.98107\times10^{-18}$ W/Hz be the noise power spectrum density.
Thus the noise power is $P_n=WN_0$.
  Assume that received signal at each eNB suffers from Rayleigh fading. In this case, the channel power gain follows the exponential distribution $f_\gamma(x)=\frac{1}{P_\gamma} e^{-\frac{x}{P_\gamma}}$, where $P_\gamma=1$ is the average power gain.
If no otherwise specified, the L\'{e}vy flight parameter is $\alpha=0.6$ and the pathloss exponent is $\beta=2$.
Suppose the flight time is $T_s=1$ s, then the user velocity will be 3 to 120 km/h when we set $0.833\leq\Delta\leq 33.3$, and if the flight length equals to one basic step length.
 Note that, users' instantaneous velocity depends both on $\Delta$ and instantaneous flight length. Although $\Delta$ is a constant in the simulation, users' instantaneous velocity will be random, which can mimic the random behaviors of users.

In the Monte Carlo simulation, the end points of each flight are determined in the following way. As shown in Fig. \ref{fig:reflec}, assume the origin of the polar coordinate is at the center of the macro cell (i.e., point $M$).
Assume UE$_i$ starts from point $O$ (i.e., $(\rho, \theta)$), towards point $Q_1$ and stops at point $O',~(\rho',\theta')$.
   Denote the  length and the direction of UE$_i$' flight as $(f, \gamma)$.
Denote $l_1=|VQ_1|=\sqrt{L^2-\rho^2\sin^2(\gamma-\theta)}$, $l_3=|OQ_1|=l_1-\rho \cos(\gamma-\theta)$ and $l_4=f-l_3$.
   Then $l_4$ is the remaining flight length after UE$_i$ has reached point $Q_1$. Let $m=\lfloor \frac{l_4}{2l_1} \rfloor$ be the number of times that UE$_i$ moves out of $C_k$, where $m=-1$ means that it will not cross the macro cell edge.
Denote $l_5 =|\mod (l_4, 2l_1)|$ as the distance between the end point of UE$_i$ and the point on the edge from which it re-enters the macro cell for the last time.
   Define $\gamma_1=\arctan \frac{l_1-l_4}{\sqrt{L^2-l_1^2}}$.
Then we have $\rho'=\sqrt{L^2-l_1^2+(l_1-l_4)^2}$ and $\theta'=\mod\left( \gamma - (-1)^{m+1}\mathrm{sign}(\gamma-\theta) \frac{\pi}{2} - (-1)^{m+1}\mathrm{sign}(\gamma-\theta) \mathrm{sign} \right.$ $\left. (l_1-l_5) \gamma_1 , 2\pi \right)$. It can be verified that this calculation of $\rho$ and $\theta$ is applicable to any $(\rho,\theta)$ and $(f,\gamma)$.

\begin{figure}[h]
\centering
\includegraphics[width=3.0in]{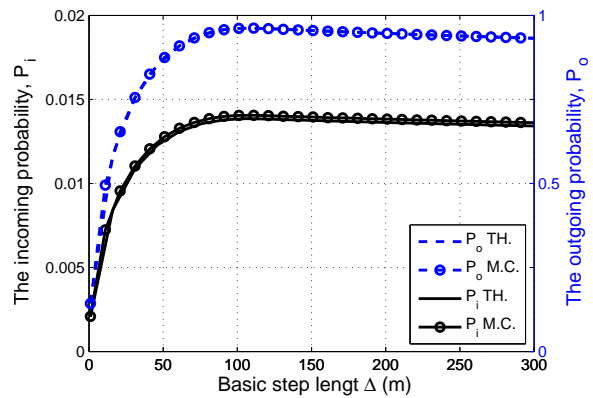}
\caption{The outgoing/incoming probability versus basic step length $\Delta$. The outgoing probability $P_o$ is shown by the dashed line and corresponds to the vertical axis on the right. The incoming probability $P_i$ corresponds to the solid line and the vertical axis on the left. Theoretical results and Monte Carlo Results are labeled by `TH.' and `M.C.' respectively. } \label{fig:pipo}
\end{figure}

First, the average incoming probability and outgoing probability are presented in Fig. \ref{fig:pipo}.
   To show a full picture of the relationships among them, let $\Delta$ range from $1$ to $300$ m.
The simulation was implemented independently for each $\Delta$. In each run of simulation, $\Delta$ is fixed and each user moves $10^6$ steps.
It is seen that as basic step length $\Delta$ increases, both $P_i$ and $P_o$  will increase  first and then decrease slightly.
   When $\Delta$ is small, users's mobility is so weak that they have little chance to move into or move out of the small cell $C_k$.
As $\Delta$ is increased, more users around $C_k$ will have the chance to enter. Thus $P_i$ will also be increased when $\Delta$ is increased.  However, the probability for a user to enter $C_k$ will decrease if $\Delta$ is increased when it has been  very large. This is because the user has more chance to step over $C_k$ rather than come into it. In summary, $P_i$ will increase first and then decrease when $\Delta$ is increased gradually.   It is seen that $P_o$ also decreases a little when $\Delta$ becomes very large, in which case a user can move out of $C_k$ as well as the macro cell, and may return to $C_k$ according to the modified reflection model.
It is seen that the theoretical results on $P_o$ coincide with the Monte Carlo results very well.
But there is a small gap between the results for $P_i$. This is because the calculation of $P_i$ according to (\ref{rt:pin}) involves a lot of very small valued integrations, which makes the result slightly smaller due to limited calculation accuracy in the simulation.

\begin{figure}[h]
\centering
\includegraphics[width=3.0in]{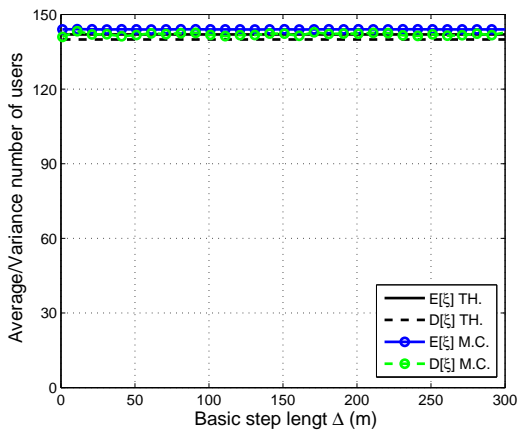}
\caption{The average and variance of the number of users in $C_k$ versus. $\Delta$.}\label{fig:edxi}
\end{figure}

The statistics of the number of users in a small cell of interest $C_k$ are presented in Fig. \ref{fig:edxi}.
It is seen that both the average $\mathbb{E}[\xi_n]$ and the variance $\mathbb{D}[\xi_n]$ of the number of users in $C_k$ is constant as $\Delta$ changes. This means that they are independent of the user velocity. That is, although the outgoing probability and incoming probability are closely related to user velocity (as shown in Fig. \ref{fig:pipo}), $\mathbb{E}[\xi_n]$ and the variance $\mathbb{D}[\xi_n]$ will not change.

Modeling the small scale fading by Rayleigh distribution, the statistics of the total interference of CSG and OSG femto cells are evaluated according to Theorem \ref{th:closemgfED} and Theorem \ref{th:openmgfED}, as shown in Fig. \ref{fig:edico}.
It is seen that the average interference will not change with basic step length $\Delta$ either, neither is its variance, for both CSG and OSG femto cells.
  Generally speaking, the interference from users with low mobility and that from users with high mobility are the same in the strength.
   The interference from users with high mobility, however, will change more quickly in the time domain.

\begin{figure}[h]
\centering
\includegraphics[width=3.0in]{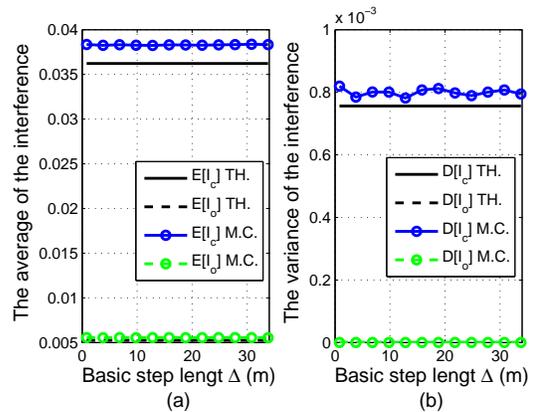}
\caption{The average and variance of the interference of CSG and OSG femto cells (W) v.s. $\Delta$, presented by (a) and (b), respectively.}\label{fig:edico}
\end{figure}

In addition, it is shown that the uplink interference to an OSG femto cell is much smaller than that to a CSG femto cell.
   In a CSG cell, interferers (unauthenticated users) may locate inside the cell.
In a OSG cells, however, only users outside of the cell but within the interfering circle will cause interference, which means that the number of interferers are fewer and the interfering distance is larger.
    In this case, the number of interferers is smaller and the distance between them and the H-eNB is larger.  Therefore, OSG femto cells are more beneficial to the network performance.
This makes sense if the radius of small cells is comparable to the interfering radius, in which case it is most likely to have only one small cell (denoted as $C_k$) within the interfering circle.
   However, for the ultra-dense network in which there are as many cells as UEs, each UE will be served by a certain small cell almost surely.
As a result, the interference to both CSG and OSG femto cells will be very small. In addition, the advantage of CSG femto cells over OSG femto cells will also be limited since their coverage is very limited.

The success probability of a CSG femto cell user, i.e., $\Pr\{\rho(\kappa)>T\}$, is presented in Fig. \ref{fig:psuc_cl} (a) and (b), where $T$ is a threshold, $\kappa=\frac{d_j}{R}$ is the ratio between the user's distance to the H-eNB and the cell radius.

\begin{figure}[h]
\centering
\includegraphics[width=3.0in]{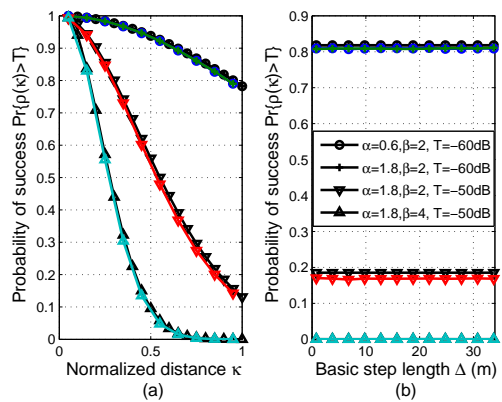}
\caption{The success probability of CSG femto cells. In (a), $\Delta=30$. In (b), $\kappa=0.9$. Black curves corresponds to theoretical results and colored curves corresponds to Monte Carlo results.}\label{fig:psuc_cl}
\end{figure}
It is seen from  Fig. \ref{fig:psuc_cl} (a) that the success probability will decrease if $\kappa$ is increased, which is due to the large scale signal attenuation.
   The L\'{e}vy flight parameter $\alpha$ for the curve marked by circle and the curve marked by `$+$' are $\alpha=0.6$ and $\alpha=1.8$, respectively.
By its physical meaning, larger $\alpha$ means that the  user will take more short flights and be less mobile.
As is shown, this does not change the success probability.
   But if we use a larger SINR threshold $T=-50$dB, the success probability becomes much smaller, as shown by the curve marked by `$\nabla$'.
Finally, the curve at the bottom corresponds to a lager pathloss exponent $\beta=4$. Due to the serious attenuation and interference, its success probability is the smallest.
Similar observations are obtained in Fig. \ref{fig:psuc_cl} (b), which presents the relationship between success probability and user velocity. It is seen that the basic step length will not change the success probability either.

\begin{figure}[h]
\centering
\includegraphics[width=3.0in]{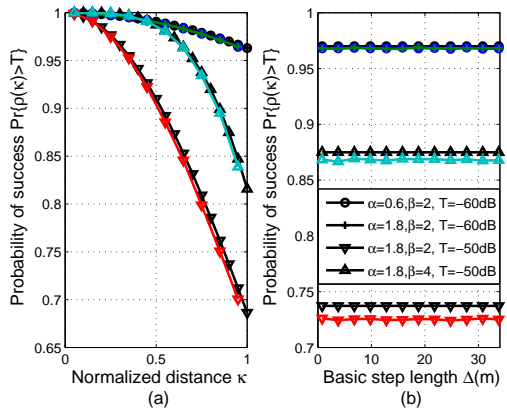}
\caption{The success probability of OSG femto cells. In (a), $\Delta=3$. In (b), $\kappa=0.9$. In (b), $\kappa=0.9$. Black curves corresponds to theoretical results and colored curves corresponds to Monte Carlo results.}\label{fig:psuc_op}
\end{figure}
\begin{figure}[h]
\centering
\includegraphics[width=3.0in]{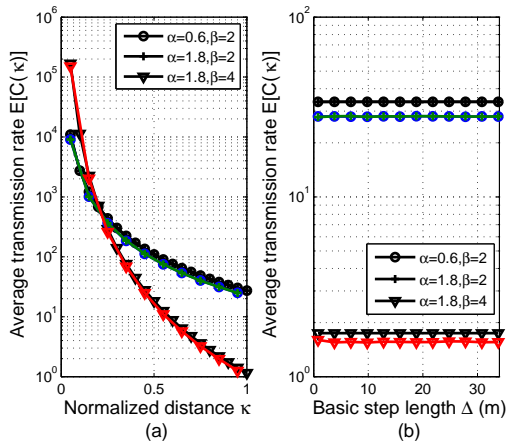}
\caption{The average rate of CSG femto cells (nat/s). In (a), $\Delta=3$. In (b), $\kappa=0.9$. In (b), $\kappa=0.9$. Black curves corresponds to theoretical results and colored curves corresponds to Monte Carlo results.}\label{fig:pec_cl}
\end{figure}
Fig. \ref{fig:psuc_op} presents the success probability of an OSG femto user. Likewise, success probability is smaller for larger $T$ and does not change with  $\alpha$ or $\Delta$.
   However, if the pathloss exponent $\beta$ is increased from 2 to 4, the success probability also becomes larger, which is contrary to the case of CSG femto cells.
In both Fig. \ref{fig:psuc_op} (a) and Fig. \ref{fig:psuc_op} (b), the curve corresponding to $\beta=4$ (labeled by `$\triangle$') achieves better performance than that corresponding to $\beta=2$ (labeled by `$\nabla$').
   In fact, although the desired signal has higher attenuation when $\beta$ is larger, the attenuation of the uplink interference to OSG femto users will be higher. As a result, the system performance gets better.

\begin{figure}[h]
\centering
\includegraphics[width=3.0in]{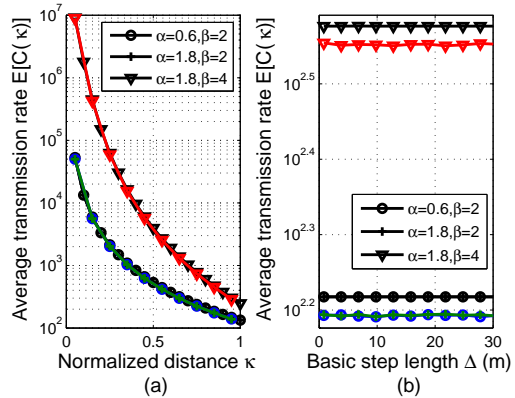}
\caption{The average rate of OSG femto cells (nat/s). In (a), $\Delta=3$. In (b), $\kappa=0.9$. In (b), $\kappa=0.9$. Black curves corresponds to theoretical results and colored curves corresponds to Monte Carlo results.}\label{fig:pec_op}
\end{figure}
Fig. \ref{fig:pec_cl} and Fig. \ref{fig:pec_op} present H-UE's average uplink transmission rate, for both CSG femto cells and OSG femto cells.
In both cases, L\'{e}vy flight parameter $\alpha$ and basic step length $\Delta$ will not affect the average rate.
 The figures also show that the performance of OSG femto cells will be better when $\beta$ is large, which is contrary to CSG femto cells.

\begin{figure}[h]
\centering
\includegraphics[width=3.0in]{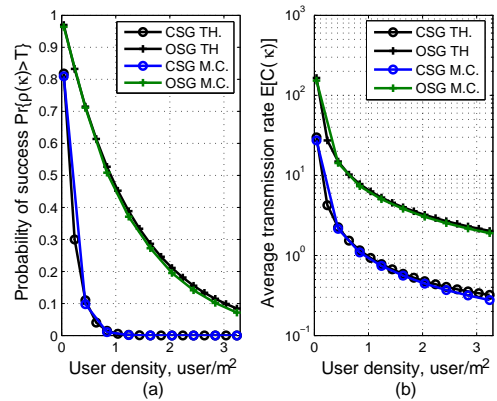}
\caption{Success probability and average rate vs user density. $\Delta=3,T=-60 dB,\kappa=0.9$. Theoretical results and Monte Carlo Results are labeled by `TH.' and `M.C.' respectively}\label{fig:pdens}
\end{figure}
Finally, the scaling of the performance of CSG and OSG femto cells with user densities, i.e., number of users per square meter, is evaluated in Fig. \ref{fig:pdens}. It is seen that, both success probability and average transmission rate  decrease with denser density, in which OSG femto cells perform better. Actually, the interference is larger when the network supports more users. The total throughput of the network, which can be obtained readily, is still increasing with the increase of user density.


\section{Conclusion}\label{sec:8}
In this work,  a mobility-aware uplink interference model for 5G heterogeneous networks was proposed.
   The proposed interference model provides an initial tool to evaluate the system performance in terms of success probability and average rate.
From our work, some interesting insights can be drawn to help the system designs.
   As is shown, the the statistics of the strength of uplink interference will not change with user velocity.
It is also seen that OSG femto users perform better since they suffer from less interference than CSG femto users.
   In addition, large pathloss exponents makes OSG femto uers perform even better but degrades the performance of CSG femto users.
Intuitively, using OSG femto cells can guarantee better frequency reuse and spectrum efficiency.  Therefore, it is suggested that more OSG femto cells or public pico/relay cells be deployed, other than CSG femto cells.

\section*{Acknowledgement}
This work was supported by the China Major State Basic Research Development Program (973 Program) No.2012CB316100(2), National Natural Science Foundation of China (NSFC) No.61171064 and NSFC No.61021001, and the Foundation from Tsinghua National Laboratory for Information Science and Technology.

\appendices
\renewcommand{\theequation}{\thesection.\arabic{equation}}

\newcounter{mytempthcnt}
\setcounter{mytempthcnt}{\value{theorem}}
\setcounter{theorem}{2}

\section{Proof of Lemma 1}\label{prf:lem1}
\begin{proof}
Assume that UE$_i$ locates at point $O$ at the beginning. Let $\boldsymbol{S}$ be arbitrary part of the macro cell with area $A_S$. Since $O$ is uniformly distributed in the macro cell, the probability that $O$ falls into $\boldsymbol{S}$ is $\Pr\{O\in \boldsymbol{S}\}=\frac{A_S}{\pi L^2}$, where $L$ is the radius of the macro cell.

Denote the flight of UE$_i$ as $(\rho, \theta)$ and the end point of the flight as $O'$. Then Lemma \ref{lem:unif_distri} will be proved if $\Pr\{O'\in \boldsymbol{S}\}=\frac{A_S}{\pi L^2}$ also holds, for arbitrary $\boldsymbol{S}$ in the macro cell.

By the modified reflection model, $O'$ will be in the macro cell, no matter how large $\rho$ is or what is $\theta$.
Therefore, the moving from point $O$ to point $O'$ can be seen as a linear mapping $\boldsymbol{F}_f(\cdot)$, from $\boldsymbol{\Omega}$ to $\boldsymbol{\Omega}$, where $\boldsymbol{\Omega}$ is closure of all points in the macro cell. Besides, it is also seen that the inverse mapping $\boldsymbol{F}_f^{-1}(\cdot)$ is also linear.
In this way, this flight can be interpreted as $O'=\boldsymbol{F}_f(O)$.
By using the inverse mapping $\boldsymbol{F}_f^{-1}(\cdot)$ to all the points within $\boldsymbol{S}$, a new region $\boldsymbol{S}_f^{-1}$ can be obtained, which also locates within the macro cell. Since the mapping is linear, it is seen that the area of $\boldsymbol{S}_f^{-1}$ is $A_S$.

By the definition of uniform distribution, we have $\Pr\{O\in \boldsymbol{S}^{-1}\}=\Pr\{O\in \boldsymbol{S}\}=\frac{A_S}{\pi L^2}$. Furthermore, the probability that $O'$ will be in $\boldsymbol{S}$ can be given by

\begin{equation}\label{dr:lem1}
\begin{split}
  \Pr\{O'\in \boldsymbol{S}\}=&\mathbb{E}_f [\Pr\{O'\in \boldsymbol{S}| \textrm{~the~flight~is~} (\rho, \theta)\}]\\
                =&\mathbb{E}_f [\Pr\{\boldsymbol{F}_f^{-1}(O') \in \boldsymbol{F}_f^{-1}(\boldsymbol{S})| \textrm{~the~flight~is~}(\rho, \theta)\} ]\\
                =&\mathbb{E}_f [ \Pr\{ O\in \boldsymbol{S}_f^{-1} | \textrm{~the~flight~is~}(\rho, \theta) \} ]\\
                =&\mathbb{E}_f [ \Pr\{ O\in \boldsymbol{S} | \textrm{~the~flight~is~}(\rho, \theta) \} ]\\
                =&\Pr\{ O\in \boldsymbol{S} \}=\frac{A_S}{\pi L^2}.
\end{split}
\end{equation}

Since $\boldsymbol{S}$ is arbitrary, we know that the location a UE$_i$ is uniformly distributed after one move. This also means that the location of UE$_i$ will be uniformly distributed throughout the operation, which completes the proof of Lemma \ref{lem:unif_distri}.
\end{proof}

{\small
\bibliographystyle{IEEEtran}

}

\end{document}